\documentclass{article}
\usepackage{amsmath, braket, amsthm, amssymb, enumitem, tikz, subcaption}
\usepackage[colorlinks = true]{hyperref}
\usepackage[all]{xy}
\CompileMatrices

\DeclareMathOperator{\tr}{trace}
\DeclareMathOperator{\diag}{diag}
\DeclareMathOperator{\nbd}{nbd}

\newtheorem{theorem}{Theorem}
\newtheorem{corollary}{Corollary}

\newtheorem{example}{Example}
\newtheorem{definition}{Definition}

\title{Quantum discord of states arising from graphs}
\author{Supriyo Dutta\\ Department of Mathematics,\\ Indian Institute of Technology Jodhpur\\ email: \texttt{dutta.1@iitj.ac.in} \vspace{.25cm}\\ 
	Bibhas Adhikari\\ Department of Mathematics,\\ Indian Institute of Technology Kharagpur\\ email: \texttt{bibhas@maths.iitkgp.ernet.in} \vspace{.25 cm}\\
	Subhashish Banerjee\\ Department of Physics,\\Indian Institute of Technology Jodhpur\\ eamil: \texttt{subhashish@iitj.ac.in} }
\date{\today}

\begin{document}
	
	\maketitle
	
	\begin{abstract}
		Quantum discord refers to an important aspect of quantum correlations for bipartite quantum systems. In our earlier works we have shown that corresponding to every graph (combinatorial) there are quantum states whose properties are reflected in the structure of the corresponding graph. Here, we attempt to develop a graph theoretic study of quantum discord that corresponds to a necessary and sufficient condition of zero quantum discord states which says that the blocks of density matrix corresponding to a zero quantum discord state are normal and commute with each other. These blocks have a one to one correspondence with some specific subgraphs of the graph which represents the quantum state. We obtain a number of graph theoretic properties representing normality and commutativity of a set of matrices which are indeed arising from the given graph. Utilizing these properties we define graph theoretic measures for normality and commutativity that results a formulation of graph theoretic quantum discord. We identify classes of quantum states with zero discord using the said formulation. 
	\end{abstract}

	\newpage 
	\section{Introduction}
	Graph theory \cite{west2001introduction} is a well-established branch of mathematics. It has made significant contributions to quantum physics \cite{hall2013quantum} and information theory \cite{berkolaiko2013introduction, dur2003multiparticle, braunstein2006laplacian}. Graphs provide a pictorial representation of quantum states \cite{adhikari2012graph}. They have been used to interpret separability property of quantum states \cite{dutta2016bipartite}, and to model useful unitary operations \cite{dutta2016graph}. Quantum correlations \cite{modi2012classical} are useful resources in quantum information theory \cite{barnett2009quantum}. Important facets of quantum correlations are quantum entanglement \cite{horodecki2009quantum} and quantum discord  \cite{henderson2001classical, ollivier2001quantum, dakic2010necessary, adhikari2012operational}. Here, we attempt to provide a graph theoretical interpretation of quantum discord.
	
	In quantum mechanics, a density matrix $\rho$ is a positive semidefinite, Hermitian matrix with unit trace, acting on a Hilbert space, say $\mathcal{H}^{(A)}$. A measure of `information' contained in the quantum state $\rho$ is the von-Neumann entropy, $S(\rho) = -\tr(\rho \log(\rho))$. A bipartite density matrix acts on a Hilbert space $\mathcal{H}^{(A)} \otimes \mathcal{H}^{(B)}$, where $\otimes$ denotes Kronecker (tensor) product, throughout this article. We denote the reduced density matrix in $\mathcal{H}^{(B)}$ with $\rho_B$. Let $\{\ket{k_B}: k = 1, 2, \dots \}$ be the standard computational basis of the Hilbert space $\mathcal{H}^{(B)}$. The conditional entropy may be defined with $S(A|\{\ket{k_B}\})$, which is given by $\sum_k p_{k_B} S(\rho_{k_B})$ where $\rho_{k_B} = \frac{1}{p_{k_B}} \bra{k_B} \rho \ket{k_B}$, and $p_{k_B} = \tr_A(\bra{k_B} \rho \ket{k_B})$. Further, conditional entropy may be expressed as $S(\rho) - S(\rho_B)$. These two quantities are equal for classical systems but differ for quantum systems. Quantum discord is the difference between two classically equivalent faces of mutual information.
	\begin{definition}{\bf Quantum discord}
		Given a quantum state $\rho$ acting on a bipartite system $\mathcal{H}^{(A)} \otimes \mathcal{H}^{(B)}$, the quantum discord is defined by \cite{huang2011new}
		$$\mathcal{D}_{\{k_B\}}(\rho) = S(A|\{\ket{k_B}\}) - \big[S(\rho) - S(\rho_B)\big].$$
	\end{definition}
	Let $\{\rho^{(a)}_i\}$ be a set of density matrices in $\mathcal{H}^{(A)}$. Quantum discord is zero for pointer states that may be expressed as,
	\begin{equation}
		\rho = \sum_i p_i \rho^{(a)}_i \otimes \ket{k_b}\bra{k_b}.
	\end{equation}

	Understanding the nature of zero quantum discord states is an important stepping stone towards understanding quantum discord, in that it acts as preliminary step to distinguish quantum from the classical aspects. It has been used to understand the completely positive evolution of a system \cite{shabani2009vanishing, rodriguez2008completely, sabapathy2013quantum}, local broadcasting \cite{barnum1996noncommuting, piani2008no}. Thus, finding zero discord quantum states is important in quantum information theory. Corresponding to any graph $G$, there are quantum states $\rho(G)$, defined below. Here, we present a new combinatorial significance to the pointer states. We study a graph theoretic interpretation of binary, normal and commutating matrices. In this framework, we provide a constructive method to generate at least one quantum state with zero discord in an arbitrary dimensional bipartite system. We come up with an idea of graph theoretic measure of quantum discord applicable for quantum states related to graphs. This article contains a considerable study on combinatorial properties of binary matrices along with their physical significance. A number of important quantum states can be represented with wighted graphs. They require a set of additional criteria. This motivates us to study discord of a larger class of quantum states in a forthcoming work \cite{next}.
	 
	This article is organized as follows. In section $2$, we compile a number of nomenclatures and results from graph theory, which would be of use to us subsequently. We generate density matrices corresponding to combinatorial graphs. The combinatorial properties of normal and commutative matrices are investigated in section $3$. These are used to investigate graph theoretic quantum discord in section $4$. We also propose a measure of quantum discord in terms of graph theoretic parameters. Section $5$ is dedicated to graph theoretic quantum states with zero discord.  We then conclude.

	\section{Preliminaries}
	
	In this section we provide a brief review on simple graphs and describe the quantum states arising from them \cite{braunstein2006laplacian, adhikari2012graph}. A simple graph $G = (V(G), E(G))$ consists of a vertex set $V(G)$ and an edge set $E(G)  \subset V(G)\times V(G)$ such that $(i,i)\notin E(G)$ for any $i\in V$ and any edge $(i,j)\in E(G)$ is treated as the same edge $(j,i)\in E(G).$ The number of vertices of $G$ which we denote by $ \#(V(G))$ is called the order of $G.$ In this paper we consider only  finite graphs, that is, $\#(V(G))<\infty.$  The adjacency matrix of a graph $G$ on $N$ vertices is a symmetric binary matrix, that is, a $(0,1)$ matrix $A = (a_{ij})_{N \times N}$ defined as  \cite{bapat2010graphs}
	\begin{equation}
	a_{ij} = \begin{cases} 1 & ~\text{if}~ (i,j) \in E(G), \\ 0 & ~\text{if}~ (i,j) \notin E(G).\end{cases}
	\end{equation}
	The degree of a vertex $i$ is $d_i = \sum_{j = 1}^N a_{ij}$. The degree matrix of the graph $G$ is given by $D(G) = \diag\{d_1, d_2, \dots d_N\}$ and we define the total degree of $G$ as $d = \sum_{i = 1}^N d_i = \tr(D)$. The combinatorial Laplacian matrix and the signless Laplacian matrix associated with the graph $G$ are defined by $$L(G) = D(G) - A(G), \,\, Q(G) = D(G) + A(G),$$ respectively  \cite{bapat2010graphs, cvetkovic2007signless}. Note that $\tr(L(G)) = \tr(Q(G)) = d$ and both $L(G), Q(G)$ are symmetric positive semidefinite matrices. 

	Recall that density matrix representation of a quantum state is described by a Hermitian positive semidefinite matrix with unit trace \cite{barnett2009quantum}. Thus density matrices corresponding to a graph $G$ are defined by
	\begin{equation}\label{def:densitym}
	\rho_l(G) = \frac{1}{d} L(G) ~\text{and}~ \rho_q(G) = \frac{1}{d} Q(G).
	\end{equation}
	They were introduced in  \cite{braunstein2006laplacian, adhikari2012graph} and represent quantum states of dimension $N$. We denote $\rho_l(G)$ and $\rho_q(G)$ together by $\rho(G)$ when no confusion arises. It is important to note that $L(G)$ and $Q(G)$ depend on the vertex labellings, and hence different labellings on the vertex set of a graph generate different quantum states \cite{braunstein2006laplacian, adhikari2012graph}.

	Given a graph $G$ on $N=mn$ vertices the vertex set $V(G)$ can be partitioned into $m$ classes, say  $C_1, C_2, \dots C_m$ such that 
	\begin{equation}\label{clustering}
	\begin{split}
	& V = C_1 \cup C_2 \cup \dots \cup C_m\\
	& C_\mu \cap C_\nu = \emptyset ~\text{for}~ \mu \neq \nu ~\text{and}~ \mu, \nu = 1, 2, \dots m\\
	& C_\mu = \{v_{\mu 1}, v_{\mu 2}, \dots v_{\mu n}\}.
	\end{split}
	\end{equation} The induced subgraph of $G$ defined by $C_\mu$, that is the graph with vertex set $C_\mu$ and edge set $\{(i,j )\in E(G) : i, j \in C_\mu \}$ is called a cluster in $G$ and we denote it by $\langle C_\mu\rangle.$ We denote the bipartite graph defined by a pair $C_\mu, C_\nu, \mu\neq \nu$ consisting of the edge set $\{(i,j) \in E(G): i \in C_\mu, j\in C_\nu\}$ and vertex set $C_\mu\cup C_\nu$ as $\langle C_\mu, C_\nu\rangle$ which is a subgraph of $G$ representing the edges between the pair of clusters for any $\mu,\nu=1,\hdots,m$. Hence, the adjacency matrix associated with $G$ can be represented as the block matrix
	\begin{equation}\label{adjacency}
		A(G) = \begin{bmatrix} A_{11} & A_{12} & \dots & A_{1m} \\ A_{21} & A_{22} & \dots & A_{2m} \\ \vdots & \vdots & \ddots & \vdots \\ A_{m1} & A_{m2} & \dots & A_{mm}\end{bmatrix}_{mn \times mn},
	\end{equation} where $A_{\mu\mu}$ denotes the adjacency matrix of the cluster $C_\mu$ and  \begin{equation}\label{adj2}
		\begin{bmatrix} 0 & A_{\mu \nu} \\ A_{\nu \mu} & 0\end{bmatrix} = \begin{bmatrix} 0 & A_{\mu \nu} \\ A_{\mu \nu}^t & 0\end{bmatrix}
	\end{equation} denotes the adjacency matrix associated with the bipartite graph $\langle C_\mu,C_\nu\rangle$ \cite{dutta2016bipartite}.

	Consequently, the density matrices corresponding to the graph $G$ are block matrices $\rho(G)=[\rho_{\mu\nu}]$ such that
	\begin{equation}\label{block_and_graph}
		\rho_{\mu \nu} = \begin{cases}
		\dfrac{s}{d}A_{\mu \nu} & ~\text{if}~ \mu \neq \nu \\\\
		\dfrac{1}{d}(D_\mu + sA_{\mu \mu}) & ~\text{if}~ \mu = \nu \\\\
	\end{cases}
	\end{equation} where  $D = \diag\{D_1, D_2, \dots D_m\}, D_\mu$ is a diagonal matrix whose diagonal entries are the degrees of the vertices belong to $C_\mu$, $s=1$ if $\rho(G)= \rho_q(G),$ and $s=-1$ if $\rho(G)= \rho_l(G).$ Thus, $\rho(G)$ represents quantum states corresponding to a bipartite system of order $m\times n$. Finally we conclude the section with the following definition which will be used later.
		
	\begin{definition}{\bf Edge characteristic function:}\label{ecf}
		Given a graph $G,$ the function $\mathcal{X}: V(G) \times V(G) \rightarrow \{0, 1\}$ defined by
		$$\mathcal{X}(v_{\mu, i},v_{\nu, j}) \equiv \mathcal{X}_{\mu, \nu}(i,j)= \begin{cases} 1 & \text{if}~ (v_{\mu, i},v_{\nu, j}) \in E(G), \\ 0 & \text{if}~ (v_{\mu, i},v_{\nu, j}) \notin E(G),\end{cases}$$ for all $\mu,\nu=1,\hdots,m$ and $i,j=1,\hdots,n$ is called an edge characteristic function.
	\end{definition}

	\section{Graph theoretic interpretations of normal and commuting matrices}
	
	As mentioned earlier, the zero quantum discord states are given by the normal and commuting blocks of the corresponding density matrices \cite{huang2011new}. In this section, we determine the structural properties of a graph on $mn$ vertices such that its density matrix has blocks that are normal and commute pairwise. We derive properties of the clusters $\langle C_\mu\rangle$ and the bipartite graphs $\langle C_\mu,C_\nu\rangle$ such that the matrices $\rho_{\mu\nu},  \mu,\nu=1,\hdots,m$ form a set of normal commutating matrices. 

	First, we discuss some notations and observations regarding graphs generated from a binary matrix. In what follows, given a vector $a=[a_1 \, a_2 \, \hdots a_n]^t\in\{0,1\}^n,$ we denote $$\overline{a}=\{i : a_i=1, 1\leq i\leq n\}.$$ Hence, given $a,b\in\{0,1\}^n$ we obtain \begin{equation}\label{inner_product} a^tb=\#(\overline{a}\cap \overline{b}).\end{equation}

	For a matrix $M = [m_{ij}]\in\{0,1\}^{n \times n}$, we denote $m_{i*}$ and $m_{*j}$ as the $i$-th row and $j$-th column vectors, respectively. Corresponding to every such matrix $M$ there is a simple bipartite graph $G_M = (V(G_M), E(G_M))$ of order $2n$ whose adjacency matrix is given by
	\begin{equation}\label{matrix_to_bipartite}
	A(G_M)=\mathcal{M} = \begin{bmatrix} 0 & M \\ M^t & 0 \end{bmatrix}.
	\end{equation} We mention that corresponding to any non-negative matrix, that is, a matrix whose all the entries are non-negative such a bipartite graph can also be defined, for example, see \cite{brualdi2006combinatorial}.

	Assuming the partitioned vertex sets of $V(G_M)$ as $C_\mu = \{v_{\mu,1}, v_{\mu,2}, \dots v_{\mu,n}\}$ and $C_\nu = \{v_{\nu,1}, v_{\nu,2}, \dots v_{\nu,n}\},$ note that $(v_{\mu i}, v_{\nu j}) \in E(G_M)$ if and only if $m_{ij} = 1$. Thus, $G_M = \langle C_\mu, C_\nu \rangle$. As $G_M$ is bipartite, the neighborhood of a vertex $v_{\mu i}$ in $G_M$ is given by
	\begin{equation}\label{mv}
		 \nbd_\mathcal{M}(v_{\mu i}) = \{v_{\nu j}: (v_{\mu i}, v_{\nu j}) \in E(G)\}\subseteq C_\nu.
	\end{equation}  
	Similarly, $ \nbd_\mathcal{M}(v_{\nu i}) \subseteq C_\mu$. Now we define a set of numbers for any $v_{\mu i}\in C_\mu$ and $v_{\nu j}\in C_\nu, 1\leq \mu,\nu \leq m$ corresponding to the bipartite graph $\langle C_\mu,C_\nu\rangle$ with the help of (\ref{mv}) as 
	\begin{eqnarray} 
	\nbd(v_{\mu i}) &=& \{j : v_{\nu j}\in  \nbd_\mathcal{M}(v_{\mu i}) \} = \overline{m_{i*}^t}, \\  \nbd(v_{\nu i}) &=& \{j : v_{\mu j}\in  \nbd_\mathcal{M}(v_{\nu i}) \} = \overline{m_{*i}}, 
	\end{eqnarray} 
	which are extensively used in the sequel.

	Let $0_{1,n}$ and $0_{n,1}$ be the zero row and column vectors. Note that, the $i$-th row of $\mathcal{M}$, that is $[0_{1,n} \, m_{i*}]\in\{0,1\}^{2n}$ depicts edges incident to $v_{\mu i}, 1\leq i\leq n$, and hence 
	$$\overline{[0_{1,n} \, m_{i*}]^t} = \overline{m_{i*}^t}= \nbd_\mathcal{M}(v_{\mu i}).$$ 
	Similarly, the $(n + i)$-th column of $\mathcal{M}$, that is $ \left[\begin{matrix} m_{*i}\\ 0_{n,1}\end{matrix}\right]$  represents the edges incident to $v_{\nu i}$ and thus 
	$$\overline{ \left[\begin{matrix} m_{*i}\\ 0_{n,1}\end{matrix}\right]} = \overline{m_{*i}}= \nbd_\mathcal{M}(v_{\nu i}).$$

	In particular, any  symmetric matrix $M\in\{0,1\}^{n\times n}$ with diagonal entries zero can be considered as an adjacency matrix of a graph $G(M)$. Let $V(G(M)) = C_\mu = \{v_{\mu,1}, v_{\mu,2}, \dots v_{\mu,n}\}$. Then  $(v_{\mu,i}, v_{\mu,j}) \in E(G(M))$ if and only if $m_{ij} =1$. Thus $G(M) = \langle C_\mu \rangle$. 

	We illustrate the above discussion using the following example.

	\begin{example}
		Consider the matrix $M = \begin{bmatrix} 0 & 1 & 1 \\ 1 & 0 & 0 \\ 1 & 0 & 0 \end{bmatrix}$. The corresponding bipartite graph, $G_M$ is:
		$$\xymatrix{\bullet_{\mu,1} \ar@{-}[dr] \ar@{-}[drr] & \bullet_{\mu, 2} \ar@{-}[dl] & \bullet_{\mu,3} \ar@{-}[dll] \\ \bullet_{\nu, 1} & \bullet_{\nu, 2} & \bullet_{\nu, 3} }$$
		Consider, $m_{*1} = (0, 1, 1)^t$, that is $\overline{m_{*1}} = \{2, 3\}$. Note that, $\nbd_{\mathcal{A}}(v_{\nu 1}) = \{v_{\mu,2}, v_{\mu,3}\}$. Also, $M$ is a symmetric binary matrix with zero diagonal entries. Thus $M$ is the adjacency matrix of the following graph $G(M)$
		$$\xymatrix{\bullet_{\mu 3} \ar@{-}[r] & \bullet_{\mu 1} \ar@{-}[r] & \bullet_{\mu 2}}.$$
	\end{example}
			
	We characterize commutativity of two binary matrices in the following results  by using the bipartite graphs introduced above. Next, we also provide a measure of non-commutativity of two binary matrices using these results.
	
	\begin{theorem}\label{commutativity}
		Let the bipartite graphs corresponding to the matrices $A, B\in\{0,1\}^{n\times n}$ be $G_A = \langle C_\mu, C_\nu \rangle$ and $G_B = \langle C_\alpha, C_\beta \rangle$, respectively. Then $AB=BA$ if and only if for all $i, j$ with $1 \le i,j \le n$,
		$$\#(\nbd(v_{\mu i}) \cap \nbd(v_{\beta j})) = \#(\nbd(v_{\nu j}) \cap \nbd(v_{\alpha i})).$$
	\end{theorem}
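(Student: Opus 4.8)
The plan is to reduce the matrix identity $AB = BA$ to an entrywise comparison and then recognize each entry as the cardinality of an intersection of neighborhood sets, via the inner-product formula (\ref{inner_product}). The whole argument is a direct computation; the only thing requiring care is the bookkeeping that attaches each binary support to the correct row- or column-neighborhood in the correct bipartite graph.

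First I would expand the $(i,j)$ entry of $AB$ as the inner product of the $i$-th row of $A$ with the $j$-th column of $B$, namely $(AB)_{ij} = \sum_k a_{ik} b_{kj} = (a_{i*}^t)^t\, b_{*j}$. Both $a_{i*}^t$ and $b_{*j}$ lie in $\{0,1\}^n$, so applying (\ref{inner_product}) with $a = a_{i*}^t$ and $b = b_{*j}$ converts this inner product into $\#(\overline{a_{i*}^t} \cap \overline{b_{*j}})$. By the neighborhood identities recorded just before the Example, $\overline{a_{i*}^t} = \nbd(v_{\mu i})$ is the row neighborhood in $G_A$ and $\overline{b_{*j}} = \nbd(v_{\beta j})$ is the column neighborhood in $G_B$, whence $(AB)_{ij} = \#(\nbd(v_{\mu i}) \cap \nbd(v_{\beta j}))$. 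Carrying out the symmetric computation for $BA$, I would write $(BA)_{ij} = (b_{i*}^t)^t\, a_{*j}$, apply (\ref{inner_product}) again to get $\#(\overline{b_{i*}^t} \cap \overline{a_{*j}})$, and identify $\overline{b_{i*}^t} = \nbd(v_{\alpha i})$ and $\overline{a_{*j}} = \nbd(v_{\nu j})$, so that $(BA)_{ij} = \#(\nbd(v_{\nu j}) \cap \nbd(v_{\alpha i}))$, using that set intersection is commutative to match the right-hand side of the statement.

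Finally, since two matrices are equal if and only if they agree in every entry, $AB = BA$ holds exactly when $(AB)_{ij} = (BA)_{ij}$ for all $i,j$ with $1 \le i,j \le n$, which is precisely the asserted equality of neighborhood cardinalities. There is no genuinely hard step; the main pitfall is that $A$ and $B$ carry distinct cluster labels ($\mu,\nu$ versus $\alpha,\beta$) and that a row support in one graph (transpose notation $\overline{m_{i*}^t}$) must not be confused with a column support $\overline{m_{*i}}$ in the other, so I would be explicit about which support maps to which $\nbd(\cdot)$ at each stage.
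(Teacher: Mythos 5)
Your proposal is correct and follows essentially the same route as the paper: both reduce $AB=BA$ to the entrywise condition $(AB)_{ij}=(BA)_{ij}$, apply the inner-product identity (\ref{inner_product}) to each entry, and identify the row support $\overline{a_{i*}^t}$ with $\nbd(v_{\mu i})$, the column support $\overline{b_{*j}}$ with $\nbd(v_{\beta j})$, and likewise for $BA$. Your extra care about which cluster label attaches to which support is exactly the bookkeeping the paper's proof relies on implicitly.
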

	
	\begin{proof} 
		Note that $AB = BA$ holds if and only if $(AB)_{ij} = (BA)_{ij}$ for all $i, j$ with $1 \le i,j \le n$. Now applying equation (\ref{inner_product}) we get,
		\begin{equation}\label{commutativity_main}
			\begin{split} 
				(AB)_{ij} & = \sum_{k = 1}^n a_{ik}b_{kj} =  a_{i*}^t b_{*j}  = \#(\nbd(v_{\mu i}) \cap \nbd(v_{\beta j})),\\
				(BA)_{ij} & = \sum_{k = 1}^n b_{ik}a_{kj} =  b_{i*}^t a_{*j} = \#(\nbd(v_{\nu j}) \cap \nbd(v_{\alpha i})).
			\end{split}
		\end{equation}
	Hence the desired result follows.\end{proof}
	
	Obviously if $AB\neq BA$ the corresponding condition on graphs do not hold. The non-commutativity of $A$ and $B$ is captured in $E(G_A)$, and $E(G_B)$. Hence we introduce the following quantity as a measure of non-commutativity of any two matrices $A, B\in\{0,1\}^{n\times n}.$
	\begin{equation}\label{commutativity_measure_1}
	\mathcal{NC}_1(A,B)= \sum_{i,j}  \big| \#(\nbd(v_{\mu i}) \cap \nbd(v_{\beta j})) - \#(\nbd(v_{\nu j}) \cap \nbd(v_{\alpha i}))\big|.
	\end{equation} 
	
	\begin{corollary}\label{commutativity_1}
		Let $A=[a_{ij}]\in\{0,1\}^{n\times n}$ be a symmetric matrix with diagonal entries zero and $B=[b_{ij}]\in\{0,1\}^{n\times n}.$ Assume that $G(A) = \langle C_\mu \rangle$ and $G_B = \langle C_\alpha, C_\beta \rangle$ are the graphs corresponding to $A$ and $B$, respectively. Then $AB=BA$ if and only if for all $i, j$ with $1 \le i,j \le n$,
		$$\#(\nbd(v_{\mu i}) \cap \nbd(v_{\beta j})) = \#(\nbd(v_{\mu j}) \cap \nbd(v_{\alpha i})).$$
	\end{corollary}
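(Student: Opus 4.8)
The plan is to obtain this corollary as a direct specialization of Theorem~\ref{commutativity}, the only new ingredient being the symmetry of $A$. First I would apply Theorem~\ref{commutativity} verbatim to the pair $A, B \in \{0,1\}^{n \times n}$, regarding $A$ momentarily as an \emph{arbitrary} binary matrix with associated bipartite graph $G_A = \langle C_\mu, C_\nu \rangle$, and $B$ with $G_B = \langle C_\alpha, C_\beta \rangle$. This immediately yields that $AB = BA$ if and only if
$$\#(\nbd(v_{\mu i}) \cap \nbd(v_{\beta j})) = \#(\nbd(v_{\nu j}) \cap \nbd(v_{\alpha i}))$$
for all $1 \le i,j \le n$.

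The crux is then to reconcile the two distinct graph-theoretic readings of $A$. In the bipartite picture the quantity $\nbd(v_{\nu j})$ is, by definition, $\overline{a_{*j}}$, the support of the $j$-th column of $A$; whereas in the single-graph picture $G(A) = \langle C_\mu \rangle$ the set $\nbd(v_{\mu j})$ is the ordinary vertex neighbourhood $\{k : a_{jk} = 1\} = \overline{a_{j*}^t}$, the support of the $j$-th row. The hypothesis that $A$ is symmetric forces $a_{*j} = a_{j*}^t$, hence $\overline{a_{*j}} = \overline{a_{j*}^t}$, so that $\nbd(v_{\nu j}) = \nbd(v_{\mu j})$ as subsets of $\{1, \dots, n\}$. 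I would note here that the zero-diagonal assumption plays no role in the algebra; it merely guarantees that $A$ is a legitimate adjacency matrix of a simple graph, so that $\langle C_\mu \rangle$ is well defined.

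Substituting this identification into the criterion just obtained replaces $\nbd(v_{\nu j})$ by $\nbd(v_{\mu j})$ and gives exactly
$$\#(\nbd(v_{\mu i}) \cap \nbd(v_{\beta j})) = \#(\nbd(v_{\mu j}) \cap \nbd(v_{\alpha i})),$$
which is the asserted equivalence. No separate recomputation of the matrix products is needed, since the inner-product expressions of equation~(\ref{commutativity_main}) are inherited unchanged from the theorem.

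I expect the only delicate point to be notational rather than mathematical: one must be careful that the symbol $\nbd(v_{\mu i})$ is used consistently, because for a symmetric $A$ the row-side and column-side neighbourhoods of the bipartite graph $G_A$ genuinely collapse onto the single neighbourhood of $G(A)$. Making this collapse explicit through the identity $\overline{a_{*j}} = \overline{a_{j*}^t}$ is precisely what converts the generic commutativity test into its symmetric form, so there is no real analytic obstacle to overcome.
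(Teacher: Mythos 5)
Your proof is correct and follows essentially the same route as the paper's: both reduce to Theorem~\ref{commutativity} (equivalently, the inner-product identities $(AB)_{ij}=a_{i*}^t b_{*j}$ and $(BA)_{ij}=b_{i*}^t a_{*j}$) and then use the symmetry of $A$ to identify the column support $\overline{a_{*j}}$ with the row support $\overline{a_{j*}^t}$, thereby collapsing $\nbd(v_{\nu j})$ onto $\nbd(v_{\mu j})$. The paper carries out this identification at the level of the inner products while you carry it out at the level of the neighbourhood sets, but the content is identical.
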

	
	\begin{proof}
		We have already justified that, $\overline{a_{i*}^t} = \nbd(v_{\mu i}) = \overline{a_{*i}}$, for all $i = 1, 2, \dots n$. Further $AB=BA$ if and only if $ a_{i*}^t b_{*j} = b_{i*}^t a_{*j}$ for all $i,j.$ Applying the symmetry of $A$, we obtain $ a_{i*}^t b_{*j} = a_{j*}^t b_{i*}$. Using the graph theoretic convention $\#(\nbd(v_{\mu i}) \cap \nbd(v_{\beta j})) = \#(\nbd(v_{\mu j}) \cap \nbd(v_{\alpha i}))$.
	\end{proof}

	When such matrices $A,B$ in the above corollary do not commute we denote \begin{equation}\label{measure:nc2ij}
		\mathcal{N}\mathcal{C}_2(A,B)_{ij}=  \#(\nbd(v_{\mu i}) \cap \nbd(v_{\beta j})) - \#(\nbd(v_{\mu j}) \cap \nbd(v_{\alpha i})),
	\end{equation} and define a measure of non-commutativity of the pair of matrices $A,B$ as 
	\begin{equation}\label{measure:nc2}
		\mathcal{N}\mathcal{C}_2(A,B)= \sum_{i,j} \big| \#(\nbd(v_{\mu i}) \cap \nbd(v_{\beta j})) - \#(\nbd(v_{\mu j}) \cap \nbd(v_{\alpha i}))\big|.
	\end{equation}
	
	\begin{corollary}\label{commutativity_2}
		Let $A=[a_{ij}], B=[b_{ij}]\in\{0,1\}^{n\times n}$ be symmetric matrices with zero diagonal entries. Suppose $G(A) = \langle C_\mu \rangle$ and $G(B) = \langle C_\nu \rangle.$ Then $AB=BA$ if and only if for every $i, j$ with $1 \le i, j \le n$,
		$$\#(\nbd(v_{\mu i}) \cap \nbd(v_{\nu j})) = \#(\nbd(v_{\mu j}) \cap \nbd(v_{\nu i})).$$
	\end{corollary}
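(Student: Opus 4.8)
The plan is to obtain this as a direct specialization of Corollary~\ref{commutativity_1}, now using the extra hypothesis that $B$ (and not only $A$) is symmetric with zero diagonal. The whole argument rests on computing the $(i,j)$ entries of $AB$ and $BA$ via the inner-product identity~(\ref{inner_product}) and then translating the resulting row/column vectors into neighborhood sets, exactly as in the proofs of Theorem~\ref{commutativity} and Corollary~\ref{commutativity_1}.

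First I would record that $AB = BA$ holds if and only if $(AB)_{ij} = (BA)_{ij}$ for all $i,j$ with $1 \le i,j \le n$, and that by~(\ref{inner_product}),
$$(AB)_{ij} = a_{i*}^t b_{*j} = \#(\overline{a_{i*}^t} \cap \overline{b_{*j}}), \qquad (BA)_{ij} = b_{i*}^t a_{*j} = \#(\overline{b_{i*}^t} \cap \overline{a_{*j}}).$$
Next I would invoke the symmetry of both matrices. Since $A$ is symmetric with zero diagonal, $\overline{a_{i*}^t} = \overline{a_{*i}} = \nbd(v_{\mu i})$ for every $i$, and since $B$ is symmetric with zero diagonal, $\overline{b_{i*}^t} = \overline{b_{*i}} = \nbd(v_{\nu i})$ for every $i$. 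The point to emphasize is that symmetry forces the row-neighborhood and column-neighborhood of each vertex to coincide, so the two vertex classes $C_\alpha, C_\beta$ of the bipartite graph $G_B$ appearing in Corollary~\ref{commutativity_1} collapse onto the single class $C_\nu$ of the graph $G(B) = \langle C_\nu\rangle$.

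Substituting these identifications into the entry formulas above gives $(AB)_{ij} = \#(\nbd(v_{\mu i}) \cap \nbd(v_{\nu j}))$ and $(BA)_{ij} = \#(\nbd(v_{\nu i}) \cap \nbd(v_{\mu j}))$. Equating these for all $i,j$ and using that set intersection is commutative, so that $\nbd(v_{\nu i}) \cap \nbd(v_{\mu j}) = \nbd(v_{\mu j}) \cap \nbd(v_{\nu i})$, yields precisely the stated condition $\#(\nbd(v_{\mu i}) \cap \nbd(v_{\nu j})) = \#(\nbd(v_{\mu j}) \cap \nbd(v_{\nu i}))$.

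I do not expect any real obstacle here, as the result is a routine corollary of the preceding ones; the only thing requiring care is the index bookkeeping, namely verifying that symmetry of $B$ correctly replaces both $\nbd(v_{\alpha i})$ and $\nbd(v_{\beta j})$ of Corollary~\ref{commutativity_1} by $\nbd(v_{\nu i})$ and $\nbd(v_{\nu j})$, respectively. Once that identification is made explicit, the equivalence follows immediately in both directions.
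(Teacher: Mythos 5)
Your proposal is correct and follows essentially the same route as the paper, which simply specializes Corollary~\ref{commutativity_1} by setting $\alpha=\beta=\nu$; your version just makes explicit the identification $\overline{b_{i*}^t}=\overline{b_{*i}}=\nbd(v_{\nu i})$ that justifies this collapse of the two vertex classes of $G_B$ onto the single class $C_\nu$. No gaps.
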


	\begin{proof}
		The proof follows from the above corollary by setting $\alpha = \beta = \nu$.
	\end{proof}

	Thus, given two symmetric binary matrices with diagonal entries zero we denote 
	\begin{equation}\label{measure:nc3ij}
		\mathcal{N}\mathcal{C}_3(A,B)_{ij}=\#(\nbd(v_{\mu i}) \cap \nbd(v_{\nu j})) - \#(\nbd(v_{\mu j}) \cap \nbd(v_{\nu i})),
	\end{equation} and define a measure of non-commutativity of $A$ and $B$ as 
	\begin{equation}\label{measure:nc3}
		\mathcal{N}\mathcal{C}_3(A,B)=\sum_{i,j} \big|\#(\nbd(v_{\mu i}) \cap \nbd(v_{\nu j})) - \#(\nbd(v_{\mu j}) \cap \nbd(v_{\nu i}))\big|.
	\end{equation}
	
	Now we provide graph theoretic interpretation of normality of a binary matrix as follows. 

	\begin{theorem}\label{normality}
		Let $A=[a_{ij}]\in\{0,1\}^{n\times n}$ and $G_A = \langle C_\mu, C_{\nu} \rangle$ be the bipartite graph corresponding to $A.$ Then $A$ is normal, that is, $AA^t=A^tA$ if and only if for every $i$ and $j$ with $1 \le i, j \le n$,
		$$\#(\nbd(v_{\mu i}) \cap \nbd(v_{\mu j})) = \#(\nbd(v_{\nu i}) \cap \nbd(v_{\nu j})).$$
	\end{theorem}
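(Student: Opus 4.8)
The plan is to follow the same entrywise strategy used in the proof of Theorem \ref{commutativity}. Since $A$ is normal precisely when $AA^t = A^tA$, and two matrices coincide if and only if all of their entries agree, I would begin by reducing the claim to the assertion that $(AA^t)_{ij} = (A^tA)_{ij}$ for every pair $i,j$ with $1 \le i,j \le n$. The whole argument then amounts to computing these two families of entries and interpreting each one combinatorially via the bipartite graph $G_A = \langle C_\mu, C_\nu \rangle$.

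First I would expand $(AA^t)_{ij} = \sum_{k=1}^n a_{ik}a_{jk}$ and recognize this sum as the inner product $a_{i*}^t a_{j*}$ of the $i$-th and $j$-th rows of $A$, regarded as vectors in $\{0,1\}^n$. Applying equation (\ref{inner_product}) together with the identification $\overline{a_{i*}^t} = \nbd(v_{\mu i})$ recorded earlier in the section, this entry becomes $\#(\nbd(v_{\mu i}) \cap \nbd(v_{\mu j}))$. Symmetrically, $(A^tA)_{ij} = \sum_{k=1}^n a_{ki}a_{kj}$ is the inner product of the $i$-th and $j$-th columns of $A$, and using $\overline{a_{*i}} = \nbd(v_{\nu i})$ it equals $\#(\nbd(v_{\nu i}) \cap \nbd(v_{\nu j}))$. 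Equating these two expressions entry by entry yields exactly the stated neighborhood condition, and the equivalence follows immediately.

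The computation is routine, and I expect no genuine obstacle beyond careful bookkeeping of transposes and indices. The one point requiring attention is that the entries of $AA^t$ pair \emph{rows} of $A$, hence vertices in $C_\mu$, whereas the entries of $A^tA$ pair \emph{columns} of $A$, hence vertices in $C_\nu$; both resulting intersections are therefore taken within a single part of the bipartition, $C_\mu$ on one side and $C_\nu$ on the other. This asymmetry between $\mu$ and $\nu$ is genuine and is precisely what distinguishes the normality condition from the commutativity results above, so it must be preserved throughout. Since the passage from matrix inner products to neighborhood intersections is supplied verbatim by (\ref{inner_product}), once the indexing is handled correctly the proof closes in a single line.
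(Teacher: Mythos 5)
Your proposal is correct and follows essentially the same route as the paper: reduce normality to the entrywise identity $(AA^t)_{ij}=(A^tA)_{ij}$, compute each entry as a row--row (resp.\ column--column) inner product, and convert it via equation (\ref{inner_product}) into the neighborhood intersection counts $\#(\nbd(v_{\mu i}) \cap \nbd(v_{\mu j}))$ and $\#(\nbd(v_{\nu i}) \cap \nbd(v_{\nu j}))$. Your explicit remark that rows correspond to vertices of $C_\mu$ and columns to vertices of $C_\nu$ is exactly the bookkeeping the paper performs by setting $B=A^t$ and noting $b_{i*}=a_{*i}$.
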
	
	
	\begin{proof}
		Let $B = (b_{ij})_{n \times n} = (a_{ji})_{n \times n} = A^t$. Clearly, $b_{i*} = a_{*i}$ and $b_{*i} = a_{i*}$ for all $i$. Note that,
		\begin{equation} 
			(AA^t)_{ij} = \sum_{k = 1}^n a_{ik}b_{kj} =  a_{i*}^t b_{*j} = a_{i*}^t a_{j*} \rangle = \#(\nbd(v_{\mu i}) \cap \nbd(v_{\mu j})).
		\end{equation} 
		Similarly, $(A^tA)_{ij} = \#(\nbd(v_{\nu i}) \cap \nbd(v_{\nu j}))$. Hence, for any two $i$, and $j$ with $1 \le i,j \le n$ we have, $\#(\nbd(v_{\mu i}) \cap \nbd(v_{\mu j})) = \#(\nbd(v_{\nu i}) \cap \nbd(v_{\nu j}))$.
	\end{proof}
	
	When $A\in\{0,1\}^{n\times n}$ is not a normal matrix we define its non-normality in terms of the edges in $G_A$ by the following quantity:
	\begin{equation}\label{normality_measure_1}
		\mathcal{NN}(A)=\sum_{i,j} |\#(\nbd(v_{\mu i}) \cap \nbd(v_{\mu j})) - \#(\nbd(v_{\nu i}) \cap \nbd(v_{\nu j})) |.
	\end{equation}

	\section{Quantum discord of states corresponding to graphs}
	
	We first recall the clusters for a given graph $G$ on $mn$ vertices that are mentioned in equation (\ref{clustering}). Note that any such simple graph $G$ can be partitioned into edge-disjoint subgraphs $\langle C_\mu \rangle$ and $\langle C_\mu, C_\nu \rangle$,  $1 \le \mu, \nu \le m$, and properties of these  subgraphs determine some properties of $G$. In order to determine the zero quantum discord states which arise from $G$, the blocks of $\rho(G)=[\rho_{\mu\nu}]$ must satisfy the following conditions: 
	\begin{eqnarray} 
	\rho_{\mu\mu}^t\rho_{\mu\mu} &=& \rho_{\mu\mu}\rho_{\mu\mu}^t \label{prop1},\\ 
	\rho_{\mu\nu}^t\rho_{\mu\nu} &=& \rho_{\mu\nu}\rho_{\mu\nu}^t, \mu\neq \nu \label{prop2} \\ \rho_{\mu\nu}\rho_{\alpha\beta} &=& \rho_{\alpha\beta}\rho_{\mu\nu}, \mu\neq \nu, \alpha\neq \beta, (\mu,\nu)\neq (\alpha,\beta) \label{prop3} \\
	\rho_{\mu\mu}\rho_{\alpha\beta} &=& \rho_{\alpha\beta}\rho_{\mu\mu}, \alpha\neq \beta \label{prop4}\\ \rho_{\mu\mu}\rho_{\nu\nu} &=& \rho_{\nu\nu}\rho_{\mu\mu}, \label{prop5}\end{eqnarray} 
	where $\rho_{\mu\mu}=D_\mu+ A_{\mu\mu}$ if $\rho(G)=\rho_l(G),$  $\rho_{\mu\mu}=D_\mu- A_{\mu\mu}$ if $\rho(G)=\rho_s(G),$ $\rho_{\mu\nu}=A_{\mu\nu}, \mu\neq \nu$, and  $A_{\mu\mu}, A_{\mu\nu}$ are described in equations (\ref{adjacency}) and (\ref{adj2}). Thus the quantum states $\rho(G)$ arising from a graph $G$ must satisfy the conditions (\ref{prop1})-(\ref{prop5}) to represent a state with zero quantum discord. It is needless to mention that the structural properties of the clusters and the edges between the clusters determine the same. 

	Observe that the condition (\ref{prop1}) satisfies trivially for any graph $G$ since the matrix $A_{\mu\mu}$ represents a symmetric adjacency matrix associated with the cluster $\langle C_\mu\rangle$. Also, $D_\mu$ is a diagonal matrix. The condition (\ref{prop2}) is satisfied by $G$ if all the bipartite graphs $\langle C_\mu, C_\nu \rangle$ meet the normality condition given in Theorem \ref{normality}. If there are some block matrices $\rho_{\mu \nu}$ which are not normal, hence violate (\ref{prop2}), the amount of non-normality can be measured by using formula (\ref{normality_measure_1}) considering all pairs of $\mu, \nu$ such that $\mu\neq \nu.$ Thus the quantity  
	\begin{equation}\label{normality_deviation}
		\sum_{\mu\neq \nu} \mathcal{NN}(A_{\mu\nu})
	\end{equation} measures the violation of (\ref{prop2}).

	The condition (\ref{prop3}) is satisfied if all the pair of bipartite graphs $\langle C_\mu, C_\nu \rangle$, and $\langle C_\alpha, C_\beta \rangle, 1\leq \mu,\nu,\alpha,\beta\leq m, \mu\neq \nu, \alpha\neq \beta, (\mu,\nu)\neq (\alpha,\beta)$ satisfy Theorem \ref{commutativity}. If the condition gets vilolated, the amount of non-commutativity defined in (\ref{commutativity_measure_1}) can be used to measure the violation of condition (\ref{prop3}) due to all pairs of $A_{\mu\nu}, A_{\alpha\beta}$ as  \begin{equation}\label{commutativity_deviation}
		\sum_{\mu\neq\nu, \alpha\neq\beta}\mathcal{NC}_1(A_{\mu\nu}, A_{\alpha\beta}).
	\end{equation}

	Note that the condition (\ref{prop4}) deals with the commutativity between $\rho_{\mu \mu}$ and $\rho_{\alpha \beta}, \alpha\neq \beta$, that is, the graph $G$ will satisfy $\rho_{\mu \mu} \rho_{\alpha \beta} = \rho_{\alpha \beta} \rho_{\mu \mu}, 1\leq \mu,\alpha,\beta\leq m$. Thus
	\begin{equation}
		\begin{split} 
			& \frac{1}{d}(D_{\mu} +s A_{\mu \mu})\frac{s}{d}A_{\alpha \beta} = \frac{s}{d}A_{\alpha \beta} \frac{1}{d}(D_{\mu} +s A_{\mu \mu})\\
			\Rightarrow~ & D_{\mu}A_{\alpha \beta} +s A_{\mu \mu}A_{\alpha \beta} = A_{\alpha \beta}D_{\mu} +s A_{\alpha \beta}A_{\mu \mu}
		\end{split} 
	\end{equation} where $s=1$ if $\rho(G)=\rho_l(G)$ and $s=-1$ if $\rho(G)=\rho_s(G).$
	Rearranging the terms we obtain
	\begin{equation}
		(D_{\mu}A_{\alpha \beta} - A_{\alpha \beta}D_{\mu}) + s (A_{\mu \mu}A_{\alpha \beta} - A_{\alpha \beta}A_{\mu \mu}) =  0.
	\end{equation}
	Recall that $D_\mu$ represents the diagonal matrix having the deagonal entries as the degrees of the vertices belong to $C_\mu$ and $A_{\mu \mu}$ is the adjacency matrix of the cluster $\langle C_\mu \rangle.$ Besides, $A_{\alpha \beta}$ corresponds to the bipartite graph $\langle C_\alpha, C_\beta \rangle$. Thus the above equation holds if for all $i,j$ with $1 \le i,j \le n$
	\begin{equation}
	\begin{split} 
		& (D_{\mu}A_{\alpha \beta})_{ij} - (A_{\alpha \beta}D_{\mu})_{ij} +s \{ (A_{\mu \mu}A_{\alpha \beta})_{ij} - (A_{\alpha \beta}A_{\mu \mu})_{ij} \} =  0\\
		\Rightarrow~ & d_{\mu i}(A_{\alpha \beta})_{ij} - (A_{\alpha \beta})_{ij}d_{\mu j} +s \{ (A_{\mu \mu}A_{\alpha \beta})_{ij} - (A_{\alpha \beta}A_{\mu \mu})_{ij} \} =  0.
	\end{split}
	\end{equation} 
	Further, $(A_{\alpha \beta})_{ij}$ is either $0$ or $1$ depending on the existence of the edge $(v_{\alpha i}, v_{\beta j})$ in $G$. Thus the graph $G$ satisfies condition (\ref{prop4}) if 
	\begin{equation}
		\mathcal{X}_{\alpha \beta}(i,j)(d_{\mu i} - d_{\mu j}) +s (\#(\nbd(v_{\mu i}) \cap \nbd(v_{\beta j})) - \#(\nbd(v_{\mu j}) \cap \nbd(v_{\alpha i})))= 0
	\end{equation} as follows from Corollary \ref{commutativity_1} for all $1\leq \mu,\alpha,\beta\leq m, \alpha\neq \beta,$ where $\mathcal{X}_{\alpha\beta}$ denotes the edge characteristic function defined in Definition \ref{ecf}. Moreover the violation of the condition (\ref{prop4}) can be represented by 
	\begin{equation}\label{commutativity_1_deviation}
		\sum_{\mu,\alpha\neq\beta} \sum_{i,j} \big|\mathcal{X}_{\alpha \beta}(i,j)(d_{\mu i} - d_{\mu j}) +s\mathcal{NC}_2(A_{\mu\mu}, A_{\alpha\beta})_{ij}\big|
	\end{equation}  where $\mathcal{NC}_2(A_{\mu\mu}, A_{\alpha\beta})_{ij}$ is given by equation (\ref{measure:nc2ij}), $s=1$ if $\rho(G)=\rho_l(G)$ and $s=-1$ if $\rho(G)=\rho_s(G).$ 

	Finally the condition (\ref{prop5}) holds if $\rho_{\mu \mu}\rho_{\nu \nu} = \rho_{\nu \nu}\rho_{\mu \mu}$ which implies that
	\begin{equation}
		\begin{split} 
			& \frac{1}{d}(D_\mu +s A_{\mu \mu}) \frac{1}{d}(D_\nu +s A_{\nu \nu}) = \frac{1}{d}(D_\nu +s A_{\nu \nu}) \frac{1}{d}(D_\mu +s A_{\mu \mu})\\
			\Rightarrow~ & D_\mu D_\nu +s D_\mu A_{\nu \nu} +s A_{\mu \mu}D_\nu + A_{\mu \mu}A_{\nu \nu} = D_\nu D_\mu +s D_\nu A_{\mu \mu} +s A_{\nu \nu}D_\mu + A_{\nu \nu}A_{\mu \mu}\\
			\Rightarrow~ & (A_{\mu \mu}A_{\nu \nu} - A_{\nu \nu}A_{\mu \mu}) +s (D_\mu A_{\nu \nu} - A_{\nu \nu}D_\mu) +s (A_{\mu \mu}D_\nu - D_\nu A_{\mu \mu}) = 0\\
			\Rightarrow~ & (A_{\mu \mu}A_{\nu \nu} - A_{\nu \nu}A_{\mu \mu})_{ij} +s (D_\mu A_{\nu \nu} - A_{\nu \nu}D_\mu)_{ij} +s (A_{\mu \mu}D_\nu - D_\nu A_{\mu \mu})_{ij} = 0,
		\end{split}
	\end{equation}
	holds for all $1 \le i,j \le n$. Note that $A_{\mu \mu}$ and $A_{\nu \nu}$ represent the adjacency matrices corresponding to the clusters $\langle C_\mu\rangle$ and $\langle C_\nu\rangle$ respectively and commutativity of such matrices has been discussed in Corollary \ref{commutativity_2}. Note that \begin{align}
		& (D_\mu A_{\nu \nu} - A_{\nu \nu}D_\mu)_{ij} = d_{\mu i}(A_{\nu \nu})_{ij} - (A_{\nu \nu})_{ij} d_{\mu j} = \mathcal{X}_{\nu \nu}(i,j)(d_{\mu i} - d_{\mu j})\label{eqn:1}\\
		& (A_{\mu \mu}D_\nu - D_\nu A_{\mu \mu})_{ij} = (A_{\mu \mu})_{ij}d_{\nu j} - d_{\nu i}(A_{\nu \nu})_{ij} = \mathcal{X}_{\mu \mu}(i,j)(d_{\nu j} - d_{\nu i}) \label{eqn:2}
	\end{align} 
	which follows from the definition of the edge characteristic function $\mathcal{X},$ and 
	\begin{equation}\label{eqn:3}
		(A_{\mu \mu}A_{\nu \nu} - A_{\nu \nu}A_{\mu \mu})_{ij} = \#(\nbd(v_{\mu i}) \cap \nbd(v_{\nu j})) - \#(\nbd(v_{\mu j}) \cap \nbd(v_{\nu i})).
	\end{equation} Combining the equations (\ref{eqn:1})-(\ref{eqn:3}) we obtain 
	\begin{eqnarray}
		&& \big[\#(\nbd(v_{\mu i}) \cap \nbd(v_{\nu j})) - \#(\nbd(v_{\mu j}) \cap \nbd(v_{\nu i})) \big] \nonumber \\ && +s \big[\mathcal{X}_{\nu \nu}(i,j)(d_{\mu i} - d_{\mu j})\big] +s \big[\mathcal{X}_{\mu \mu}(i,j)(d_{\nu j} - d_{\nu i})\big] = 0
	\end{eqnarray} which must be  satisfied for all $1\leq i,j\leq n$ in order to satisy conditon (\ref{prop5}).

	Further, observe that if the vertices belong to $C_\mu, \mu = 1, 2, \dots m$ have equal degree, then $d_{\mu i} - d_{\mu j} = 0$ as well as $d_{\nu j} - d_{\nu i} = 0$. Then $G$ satisfies condition (\ref{prop5}) if and only if for any two subgraphs $\langle C_\mu \rangle$ and $\langle C_\nu \rangle$, conditions of Corollary \ref{commutativity_2} is fulfilled. Thus a measure of violation of the (\ref{prop5}) can be defined by
	\begin{equation}\label{commutativity_2_deviation}
		\sum_{\mu\neq \nu} \sum_{i,j} \big| \mathcal{NC}_3 (A_{\mu\mu}, A_{\nu\nu})_{ij} +s \big[\mathcal{X}_{\nu \nu}(i,j)(d_{\mu i} - d_{\mu j}) + \mathcal{X}_{\mu \mu}(i,j)(d_{\nu j} - d_{\nu i})\big] \big|
	\end{equation} 
	where $\mathcal{NC}_3 (A_{\mu\mu}, A_{\nu\nu})_{ij}$ is given by (\ref{measure:nc3ij}), $s=1$ if $\rho(G)=\rho_l(G)$ and $s=-1$ if $\rho(G)=\rho_s(G).$

	Based on the discussions above it is obvious that given a graph $G$ on $mn$ vertices with a labelling on the vertices and clusters $\langle C_\mu\rangle, 1\leq \mu\leq m$ a notion of quantum discord for the quantum states $\rho(G)$ can be defined by using the equations  (\ref{commutativity_deviation}),  (\ref{commutativity_deviation}), (\ref{commutativity_1_deviation}) and (\ref{commutativity_2_deviation}).  This definition of quantum discord would then be philosophically different compared to the existing measures of quantum discord, for example see \cite{guo2016non}, as it depends on the structural properties of the clusters in the graph. Thus we introduce the following definition of quantum discord for states arising from a graph.

	\begin{definition}{\bf Graph theoretic quantum discord}\label{gtqc}
		Let $G$ be a graph on $mn$ vertices and $\langle C_\mu\rangle, C_\mu=\{v_{\mu i} : 1\leq i\leq n\}, 1\leq \mu\leq m$ be the clusters in $G.$ Then the quantum discord of the states $\rho(G)=\frac{1}{d}[D(G)+sA(G)], s\in\{1,-1\}$ is given by \begin{eqnarray} \mathcal{QD}(G) &=& \sum_{\mu\neq \nu} \mathcal{NN}(A_{\mu\nu}) + \sum_{\mu\neq \nu} \mathcal{NN}(A_{\mu\nu}) \nonumber \\ && + \sum_{\mu,\alpha\neq\beta} \sum_{i,j} \big|\mathcal{X}_{\alpha \beta}(i,j)(d_{\mu i} - d_{\mu j}) +s\mathcal{NC}_2(A_{\mu\mu}, A_{\alpha\beta})_{ij}\big| \nonumber \\ &&+ \sum_{\mu\neq \nu} \sum_{i,j} \big| \mathcal{NC}_3 (A_{\mu\mu}, A_{\nu\nu})_{ij} +s \big[\mathcal{X}_{\nu \nu}(i,j)(d_{\mu i} - d_{\mu j}) + \mathcal{X}_{\mu \mu}(i,j)(d_{\nu j} - d_{\nu i})\big] \big| \nonumber \end{eqnarray} where $1\leq \nu\leq m,$ $1\leq j\leq n,$ $A(G)=[A_{\mu\nu}], d_{\mu i}$ is the degree of $v_{\mu i},$  and $d$ is the total degree of $G.$
	\end{definition}

	We mention that different labellings on the vertices of the same graph $G$ can produce different states with different quantum discord. The definition helps to generate zero quantum discord states defined by graphs. Indeed, a procedure to create such states with a given dimension $m\times n$ would be to define the edges of the graph such that the quantities defined in equations  (\ref{normality_deviation}),  (\ref{commutativity_deviation}), (\ref{commutativity_1_deviation}) and (\ref{commutativity_2_deviation}) become zero. 

	Now we discuss whether $\mathcal{QD}(G)$ is a valid measure of quantum correlation for bipartite states represented by the density matrices $\rho(G)$  of order $mn.$ First we recall that a general measure $\mathcal{M}$ of quantum correlation is expected to possess the following properties \cite{2014arXiv1411.3208S}. 
	\begin{enumerate}
		\item 
			$\mathcal{M}$ is non-negative.
		\item
			$\mathcal{M}$ is zero for classically correlated states.
		\item
			$\mathcal{M}$ in invariant under local unitary transformation.
	\end{enumerate}

	Setting $\mathcal{M}(\rho(G))=\mathcal{QD}(G)$ for any graph $G$ on $mn$ vertices, we have the following observations. First note that $\mathcal{QD}(G)\geq 0$ by default.  Further, by definition of $\mathcal{QD}(G)$ it is zero for classically correlated states with zero quantum discord. Finally, a general picture of graph theoretic analogue of local unitary transformations on $\rho(G),$ is not yet known, see \cite{dutta2016graph}. In fact, it is a hard problem to characterize local unitary operations which transform a given $\rho(G)$ in to an another state $\rho(H)$ for some other graph $H.$ However, we show in the next theorem that $\mathcal{QD}(G)$ is a promising measure for quantum correlation as it is invariant under local unitary operators of the form $P_1\otimes P_2$, where $P_i$, $i=1,2$ are permutation matrices, which are special unitary matrices. 

	Let the permutation matrices $P_1$ and $P_2$ act on the Hilbert spaces $\mathcal{H}^{(A)}$, and $\mathcal{H}^{(B)}$, respectively. Hence, $P_1 \otimes P_2 = (P_1 \otimes I)(I \otimes P_2)$ is a local unitary operator acting on $\mathcal{H}^{(A)} \otimes \mathcal{H}^{(B)}$. Then we have the following theorem:
	
	\begin{theorem}
		Let $P_2$ be a permutation matrix acting on the Hilbart space $\mathcal{H}^{(B)}$. Let $\rho(G)$ be a quantum state with zero discord in the bipartite system $\mathcal{H}^{(A)} \otimes \mathcal{H}^{(B)}$. Consider a graph $H$, such that, $A(H) = (I \otimes P_2)^t A(G) (I \otimes P_2)$. Then, $\rho(H)$ and $\rho(G)$ have equal quanutm discord.
	\end{theorem}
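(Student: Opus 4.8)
The plan is to reduce the conjugation by $I\otimes P_2$ to a single within-cluster relabeling and then to show that every summand in Definition \ref{gtqc} is merely permuted by this relabeling, so that the total is unchanged. Let $\pi$ be the permutation of $\{1,\dots,n\}$ associated with $P_2$, so that $(P_2^t M P_2)_{ij}=M_{\pi(i)\pi(j)}$ for any $n\times n$ matrix $M$. Since the left factor $I$ in $A(H)=(I\otimes P_2)^t A(G)(I\otimes P_2)$ is $m\times m$ (acting on the cluster index) while $P_2$ is $n\times n$ (acting on the within-cluster index), the conjugation does not mix clusters and acts blockwise as $A'_{\mu\nu}=P_2^t A_{\mu\nu}P_2$, i.e. $(A'_{\mu\nu})_{ij}=(A_{\mu\nu})_{\pi(i)\pi(j)}$. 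In other words, $H$ is exactly $G$ with the \emph{same} relabeling $\pi$ applied inside each cluster $C_\mu$. Consequently the degrees satisfy $d'_{\mu i}=d_{\mu\pi(i)}$ (so the total degree $d$ is preserved, as $\pi$ is a bijection), and the edge characteristic function satisfies $\mathcal{X}'_{\alpha\beta}(i,j)=\mathcal{X}_{\alpha\beta}(\pi(i),\pi(j))$, because $\mathcal{X}_{\alpha\beta}(i,j)=(A_{\alpha\beta})_{ij}$.

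Next I would record how the three graph-theoretic functionals transform. For normality, the proof of Theorem \ref{normality} gives $\mathcal{NN}(A)=\sum_{i,j}\big|(AA^t)_{ij}-(A^tA)_{ij}\big|$. Using $P_2P_2^t=I$ one finds $A'(A')^t-(A')^tA'=P_2^t(AA^t-A^tA)P_2$, hence $\big(A'(A')^t-(A')^tA'\big)_{ij}=(AA^t-A^tA)_{\pi(i)\pi(j)}$, and therefore $\mathcal{NN}(A'_{\mu\nu})=\mathcal{NN}(A_{\mu\nu})$ after reindexing the double sum by the bijection $(i,j)\mapsto(\pi(i),\pi(j))$. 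For the commutator functionals, the proofs of Corollaries \ref{commutativity_1} and \ref{commutativity_2} identify $\mathcal{NC}_2(A_{\mu\mu},A_{\alpha\beta})_{ij}=(A_{\mu\mu}A_{\alpha\beta}-A_{\alpha\beta}A_{\mu\mu})_{ij}$ and $\mathcal{NC}_3(A_{\mu\mu},A_{\nu\nu})_{ij}=(A_{\mu\mu}A_{\nu\nu}-A_{\nu\nu}A_{\mu\mu})_{ij}$. Since conjugation by $P_2$ commutes with the matrix commutator, we get $\mathcal{NC}_2(A'_{\mu\mu},A'_{\alpha\beta})_{ij}=\mathcal{NC}_2(A_{\mu\mu},A_{\alpha\beta})_{\pi(i)\pi(j)}$ and likewise $\mathcal{NC}_3(A'_{\mu\mu},A'_{\nu\nu})_{ij}=\mathcal{NC}_3(A_{\mu\mu},A_{\nu\nu})_{\pi(i)\pi(j)}$.

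Finally I would substitute these transformation rules into the four sums defining $\mathcal{QD}(H)$. In each mixed term the degree differences $d'_{\mu i}-d'_{\mu j}$, the edge characteristic functions, and the $\mathcal{NC}$ entries all carry the \emph{same} index substitution $i\mapsto\pi(i),\,j\mapsto\pi(j)$, so the full expression inside each absolute value for $H$ at $(i,j)$ equals the corresponding expression for $G$ at $(\pi(i),\pi(j))$. Reindexing each $\sum_{i,j}$ by the bijection $(i,j)\mapsto(\pi(i),\pi(j))$ then yields $\mathcal{QD}(H)=\mathcal{QD}(G)$ term by term. The one point requiring care is to verify that the heterogeneous pieces of each mixed summand, namely the purely combinatorial $\mathcal{NC}$ part together with the degree and edge-characteristic parts, transform under \emph{one and the same} permutation, so that they can be pulled out of the absolute value together; this is precisely what the blockwise identity $(A'_{\mu\nu})_{ij}=(A_{\mu\nu})_{\pi(i)\pi(j)}$, applied uniformly across all clusters, guarantees. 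I would also note that the zero-discord hypothesis is not actually used in establishing the invariance; it only fixes the setting of interest, and the same computation, combined with the observation that $P_1\otimes I$ merely permutes the cluster labels $\mu,\nu,\alpha,\beta$, extends the conclusion to the full local permutation $P_1\otimes P_2$.
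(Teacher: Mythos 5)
\noindent Your proposal is correct and follows essentially the same route as the paper's proof: both reduce the conjugation by $I\otimes P_2$ to a simultaneous relabeling of the vertices inside each cluster (the blockwise identity $A'_{\mu\nu}=P_2^tA_{\mu\nu}P_2$) and conclude that the graph-theoretic discord is unchanged. Your version is in fact more complete than the paper's, which stops at the remark that ``normality and commutativity conditions hold as earlier'' --- you additionally verify that the degree differences $d_{\mu i}-d_{\mu j}$ and the edge characteristic functions appearing in the mixed summands of Definition \ref{gtqc} transform under the same index substitution $(i,j)\mapsto(\pi(i),\pi(j))$, and you correctly observe that the zero-discord hypothesis in the statement is not actually needed for the invariance.
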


	\begin{proof} Note that 
	\begin{eqnarray}
		&& (I \otimes P_2)^t \rho(G) (I \otimes P_2) \nonumber \\ && = (I \otimes P_2^t) \frac{1}{d}\begin{bmatrix} D_1 +s A_{11} & s A_{12} & \dots & s A_{1m} \\ s A_{21} & D_2 +s A_{22} & \dots & s A_{2m} \\ \vdots & \vdots & \ddots & \vdots \\ s A_{m1} & s A_{m2} & \dots & D_m +s A_{mm}\end{bmatrix}(I \otimes P_2) \nonumber\\ && = \frac{1}{d}\begin{bmatrix} P_2^t(D_1 +s A_{11})P_2 & s P_2^tA_{12}P_2 & \dots & s P_2^tA_{1m}P_2 \\ s P_2^t A_{21} P_2 & P_2^t (D_2 +s A_{22}) P_2 & \dots & s P_2^t A_{2m} P_2 \\ \vdots & \vdots & \ddots & \vdots \\ s P_2^t A_{m1} P_2 & s P_2^t A_{m2} P_2 & \dots & P_2^t (D_m +s A_{mm}) P_2 \end{bmatrix}.\nonumber
	\end{eqnarray}

	Recall the subgraphs $\langle C_\mu \rangle$ and $\langle C_\mu, C_\nu \rangle$ in $G$ and the fact that graph isomorphisms are represented by permutation matrices. Hence, the above equation can be interpreted as a graph isomorphism operation. The adjacency matrix of the new subgraph corresponding to $\langle C_\mu \rangle$, and $\langle C_\mu, C_\nu \rangle$ are given by $P_2^t A_{\mu \mu} P_2$, and $\begin{bmatrix} 0 & P_2^t A_{\mu \nu}P_2 \\ P_2^t A^t_{\mu \nu}P_2 & 0 \end{bmatrix}$, respectively. Note that, the permutation matrix $P_2$ does not switch one vertex of $C_\mu$ to another vertex of $C_\nu$ when $\mu\neq \nu$ but only changes the labeling of vertices of $C\mu, 1\leq \mu\leq m.$ Thus, the normality and commutativity conditions hold as earlier in the new graph. Thus, $\rho(H)$ and $\rho(G)$ have equal quanutm discord.
	\end{proof} 

	The Werner state \cite{werner1989quantum} is a class of quantum states, important in quantum information processing. A Werner state is represented by,
	\begin{equation}\label{werner}
	\rho_{x,d} = \frac{d - x}{d^3 - d}I + \frac{xd - 1}{d^3 - d}F,
	\end{equation}
	where $F = \sum_{i,j}^d \ket{i}\bra{j} \otimes \ket{j}\bra{i}$, $x \in [0, 1]$ and $d$ is the dimention of the individual subsystems. Note that, $\rho_{x,d}$ is a symmetric matrix of order $d^2$. It can be shown that these states has non-zero quantum discord even though some of them are separable \cite{li2007total}. 
	
	\begin{example}
		We may represent $\rho_{1,3}$, and $\rho_{1,4}$ as a simple graph having $9$ and $16$ vertices in the figure \ref{werner_graph}.
		\begin{figure}
			\begin{center}
				\begin{subfigure}[b]{0.4\textwidth}
					\begin{tikzpicture}
					\node at (0,4) {$\bullet_{1,1}$};
					\node at (2,4) {$\bullet_{1,2}$};
					\node at (4,4) {$\bullet_{1,3}$};
					\node at (0,2) {$\bullet_{2,1}$};
					\node at (2,2) {$\bullet_{2,2}$};
					\node at (4,2) {$\bullet_{2,3}$};
					\node at (0,0) {$\bullet_{3,1}$};
					\node at (2,0) {$\bullet_{3,2}$};
					\node at (4,0) {$\bullet_{3,3}$};
					
					\draw {[rounded corners] (-.24,4) -- (.01,4.25) -- (-.24, 4.5) -- (-.49, 4.25) -- (-.24,4)};
					
					\draw {[rounded corners] (1.76,2) -- (2.01,2.25) -- (1.76, 2.5) -- (1.51, 2.25) -- (1.76,2)};
					
					\draw {[rounded corners] (3.76,0) -- (4.01,0.25) -- (3.76, 0.5) -- (3.51, 0.25) -- (3.76,0)};
					
					\draw {[rounded corners] (-.24,2) -- (.01,2.25) -- (-.24, 2.5) -- (-.49, 2.25) -- (-.24,2)};
					
					\draw {[rounded corners] (-.24,0) -- (.01,0.25) -- (-.24, 0.5) -- (-.49, 0.25) -- (-.24,0)};
					
					\draw {[rounded corners] (1.76,4) -- (2.01,4.25) -- (1.76, 4.5) -- (1.51, 4.25) -- (1.76,4)};
					
					\draw {[rounded corners] (1.76,0) -- (2.01,0.25) -- (1.76, 0.5) -- (1.51, 0.25) -- (1.76,0)};
					
					\draw {[rounded corners] (3.76,4) -- (4.01,4.25) -- (3.76, 4.5) -- (3.51, 4.25) -- (3.76,4)};
					
					\draw {[rounded corners] (3.76,2) -- (4.01,2.25) -- (3.76, 2.5) -- (3.51, 2.25) -- (3.76,2)};
					
					\draw (-.24,2) -- (1.76,4);
					
					\draw (1.76,0) -- (3.76,2);
					
					\draw {[rounded corners] (-.24,0) .. controls(2.5,1.5) .. (3.76,4)};
					
					\end{tikzpicture}
					\caption{Graph for $\rho_{1,3}$}
				\end{subfigure}
				\hspace{2cm}
				\begin{subfigure}[b]{0.4\textwidth}
					\begin{tikzpicture}
					\node at (0,6) {$\bullet_{1,1}$};
					\node at (2,6) {$\bullet_{1,2}$};
					\node at (4,6) {$\bullet_{1,3}$};
					\node at (6,6) {$\bullet_{1,4}$};
					\node at (0,4) {$\bullet_{2,1}$};
					\node at (2,4) {$\bullet_{2,2}$};
					\node at (4,4) {$\bullet_{2,3}$};
					\node at (6,4) {$\bullet_{2,4}$};
					\node at (0,2) {$\bullet_{3,1}$};
					\node at (2,2) {$\bullet_{3,2}$};
					\node at (4,2) {$\bullet_{3,3}$};
					\node at (6,2) {$\bullet_{1,4}$};
					\node at (0,0) {$\bullet_{4,1}$};
					\node at (2,0) {$\bullet_{4,2}$};
					\node at (4,0) {$\bullet_{4,3}$};
					\node at (6,0) {$\bullet_{4,4}$};
					
					\draw {[rounded corners] (-.24,6) -- (.01,6.25) -- (-.24, 6.5) -- (-.49, 6.25) -- (-.24,6)};
					
					\draw {[rounded corners] (1.76,4) -- (2.01,4.25) -- (1.76, 4.5) -- (1.51, 4.25) -- (1.76,4)};
					
					\draw {[rounded corners] (3.76,2) -- (4.01,2.25) -- (3.76, 2.5) -- (3.51, 2.25) -- (3.76,2)};
					
					\draw {[rounded corners] (5.76,0) -- (6.01,0.25) -- (5.76, 0.5) -- (5.51, 0.25) -- (5.76,0)};
					
					\draw {[rounded corners] (-.24,4) -- (.01,4.25) -- (-.24, 4.5) -- (-.49, 4.25) -- (-.24,4)};
					
					\draw {[rounded corners] (1.76,6) -- (2.01,6.25) -- (1.76, 6.5) -- (1.51, 6.25) -- (1.76,6)};
					
					\draw {[rounded corners] (1.76,2) -- (2.01,2.25) -- (1.76, 2.5) -- (1.51, 2.25) -- (1.76,2)};
					
					\draw {[rounded corners] (3.76,0) -- (4.01,0.25) -- (3.76, 0.5) -- (3.51, 0.25) -- (3.76,0)};
					
					\draw {[rounded corners] (-.24,2) -- (.01,2.25) -- (-.24, 2.5) -- (-.49, 2.25) -- (-.24,2)};
					
					\draw {[rounded corners] (-.24,0) -- (.01,0.25) -- (-.24, 0.5) -- (-.49, 0.25) -- (-.24,0)};
					
					\draw {[rounded corners] (1.76,0) -- (2.01,0.25) -- (1.76, 0.5) -- (1.51, 0.25) -- (1.76,0)};
					
					\draw {[rounded corners] (3.76,4) -- (4.01,4.25) -- (3.76, 4.5) -- (3.51, 4.25) -- (3.76,4)};
					
					\draw {[rounded corners] (5.76,2) -- (6.01,2.25) -- (5.76, 2.5) -- (5.51, 2.25) -- (5.76,2)};
					
					\draw {[rounded corners] (5.76,4) -- (6.01,4.25) -- (5.76, 4.5) -- (5.51, 4.25) -- (5.76,4)};
					
					\draw {[rounded corners] (5.76,6) -- (6.01,6.25) -- (5.76, 6.5) -- (5.51, 6.25) -- (5.76,6)};
					
					\draw {[rounded corners] (3.76,6) -- (4.01,6.25) -- (3.76, 6.5) -- (3.51, 6.25) -- (3.76,6)};
					
					\draw (-.24,4) -- (1.76,6);
					
					\draw (1.76,2) -- (3.76,4);
					
					\draw {[rounded corners] (-.24,2) .. controls(2.5,3.5) .. (3.76,6)};
					
					\draw {[rounded corners] (1.76,0) .. controls(4,1.5) .. (5.76,4)};
					
					\draw (3.76,0) -- (5.76,2);
					
					\draw {[rounded corners] (-.24,0) .. controls(4,3) .. (5.76,6)};
					
					\end{tikzpicture}
					\caption{Graph for $\rho_{1,4}$}
				\end{subfigure}
				\caption{Graphs of the Werner states}
				\label{werner_graph}
			\end{center}
		\end{figure}
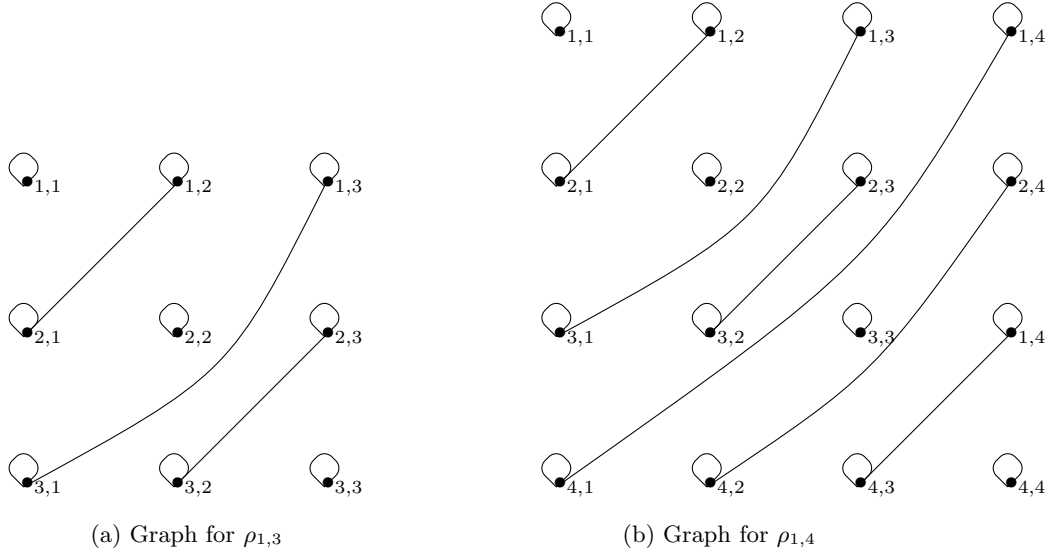
	\end{example}
	
	\begin{theorem}
		Every Werner state has non-zero discord.
	\end{theorem}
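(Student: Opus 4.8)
The plan is to write $\rho_{x,d}$ in the block form $[\rho_{\mu\nu}]$ induced by the bipartition and to show that its off-diagonal blocks violate the normality condition of Theorem~\ref{normality}. Since a state has zero discord only if every block is normal and the blocks commute pairwise, exhibiting a single non-normal block already forces non-zero discord.

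First I would evaluate $\rho_{x,d}$ in the product basis $\{\ket{\mu}\otimes\ket{i} : 1\le \mu,i\le d\}$. Using $\langle \mu,i|F|\nu,j\rangle = \delta_{\mu j}\delta_{i\nu}$, one obtains
$$\langle \mu,i|\rho_{x,d}|\nu,j\rangle = \alpha\,\delta_{\mu\nu}\delta_{ij} + \beta\,\delta_{\mu j}\delta_{i\nu}, \qquad \alpha=\frac{d-x}{d^3-d},\ \ \beta=\frac{xd-1}{d^3-d}.$$
Reading off the blocks, each diagonal block is the diagonal matrix $\rho_{\mu\mu}=\alpha I+\beta E_{\mu\mu}$, while for $\mu\neq\nu$ the off-diagonal block is the single-entry matrix $\rho_{\mu\nu}=\beta E_{\nu\mu}$ (here $E_{kl}$ denotes the matrix with a $1$ in position $(k,l)$ and zeros elsewhere). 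The diagonal blocks are manifestly normal and mutually commuting, so the entire obstruction must come from the off-diagonal blocks.

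Next I would test normality of an off-diagonal block. A direct computation gives $\rho_{\mu\nu}\rho_{\mu\nu}^t=\beta^2 E_{\nu\nu}$ and $\rho_{\mu\nu}^t\rho_{\mu\nu}=\beta^2 E_{\mu\mu}$, which are unequal whenever $\mu\neq\nu$ and $\beta\neq0$. Equivalently, in the language of Theorem~\ref{normality}, the bipartite graph $G_{A_{\mu\nu}}$ attached to this block is a single edge, and the products $A_{\mu\nu}A_{\mu\nu}^t$ and $A_{\mu\nu}^tA_{\mu\nu}$ carry their unique nonzero diagonal entries at distinct positions; hence $\mathcal{NN}(A_{\mu\nu})>0$, condition (\ref{prop2}) fails, and $\mathcal{QD}(G)\ge\sum_{\mu\neq\nu}\mathcal{NN}(A_{\mu\nu})>0$ for the graph $G$ representing $\rho_{x,d}$, by Definition~\ref{gtqc}. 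This yields non-zero discord.

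The main point to handle carefully is the vanishing of $\beta$. Since $\beta=\frac{xd-1}{d^3-d}$ vanishes exactly at $x=1/d$, the argument establishes non-zero discord for every $x\neq1/d$, whereas at the exceptional value $x=1/d$ one has $\beta=0$ and $\rho_{x,d}=\tfrac{1}{d^2}I$ collapses to the maximally mixed product state; I would therefore either restrict to the genuine Werner regime $x\neq1/d$ or flag this degenerate point explicitly. A secondary bookkeeping issue is matching the sign of $\beta$ to the graph representation $\rho(G)=\tfrac1d[D+sA]$ (the signless Laplacian $s=1$ realizing $\beta>0$ and the Laplacian $s=-1$ realizing $\beta<0$), but since non-normality of $E_{\nu\mu}$ is scale-invariant, the conclusion is insensitive to this choice.
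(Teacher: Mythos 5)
Your proposal follows the same route as the paper: it exhibits the off-diagonal blocks of the Werner state as non-normal, so condition (\ref{prop2}) fails and the discord is non-zero; the paper's own proof is just the one-line assertion that $A_{\mu\nu}$ fails the criterion of Theorem~\ref{normality}, whereas you actually compute the blocks $\beta E_{\nu\mu}$ and verify $\rho_{\mu\nu}\rho_{\mu\nu}^t \neq \rho_{\mu\nu}^t\rho_{\mu\nu}$. Your additional observation that the argument breaks down at $x=1/d$, where $\beta=0$ and $\rho_{1/d,d}=\tfrac{1}{d^2}I$ is a zero-discord product state, is correct and identifies a genuine exceptional case that the paper's statement overlooks.
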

	
	\begin{proof} 
		Consider the subgraph $\langle C_\mu, C_\nu \rangle$ for any $\mu$ and $\nu$, with $\mu \neq \nu$. Using the lemma \ref{normality}, we may conclude that $A_{\mu, \nu}$ is not a normal matrix. Thus every Werner state has a non-zero discord. 
	\end{proof}
	
	A detailed study of quantum discord of states represented with weighted graphs will be presented in an upcoming work \cite{next}.

	\section{Graph theoretic zero quantum discord states}

	As discussed above, we can generate zero quantum discord bipartite states of dimension $m\times n$ arising from graphs for any $m, n$ by generating graphs $G$ for which $\mathcal{QD}(G)=0.$ In fact states arising from a graph $G$ have zero quantum discord if and only if $\mathcal{QD}(G)=0.$ In this section we determine certain standard graphs which have always zero quantum discord for any labelling on the vertices.

	First we have the following theorem for complete graphs on $N=mn$ vertices which are graphs in which any pair of distinct vertices are adjacent.

	\begin{theorem}\label{Thm:com}
		Let $G$ be a complete graph on $N=mn$ vertices. Then the states $\rho(G)$ have zero quantum discord, that is, $\mathcal{QD}(G)=0$.
	\end{theorem}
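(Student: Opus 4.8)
The plan is to exploit the highly uniform block structure that a complete graph imposes on $\rho(G)$, independently of the chosen labelling, and then to check that each of the four summands in Definition \ref{gtqc} vanishes term by term. First I would fix an arbitrary partition of the $mn$ vertices into clusters $C_1,\dots,C_m$ of size $n$ and record the blocks of $A(G)$. Since every pair of distinct vertices of the complete graph is adjacent, the induced subgraph $\langle C_\mu\rangle$ on any $n$-subset is itself a complete graph, so $A_{\mu\mu}=J_n-I_n$, where $J_n$ denotes the $n\times n$ all-ones matrix; and the bipartite graph $\langle C_\mu,C_\nu\rangle$ between any two distinct clusters is the complete bipartite graph, so $A_{\mu\nu}=J_n$ for $\mu\neq\nu$. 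Moreover every vertex has degree $mn-1$, whence $D_\mu=(mn-1)I_n$ and $d_{\mu i}=mn-1$ for all $\mu,i$. The crucial point is that these blocks are the same for every labelling, because every $n$-subset induces a complete graph and every pair of disjoint $n$-subsets induces a complete bipartite graph, facts that do not depend on the particular vertex ordering.

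With this structure in hand, I would verify the normality condition (\ref{prop2}) first: each off-diagonal block $A_{\mu\nu}=J_n$ is symmetric, hence trivially normal, so $\mathcal{NN}(A_{\mu\nu})=0$ for all $\mu\neq\nu$; equivalently, in Theorem \ref{normality} every row- and column-neighbourhood of $J_n$ equals $\{1,\dots,n\}$, so both intersection counts equal $n$. This kills the first summand of $\mathcal{QD}(G)$. The second summand, whether read as the literal duplicate of the first or as the commutativity deviation (\ref{commutativity_deviation}), vanishes as well, since all off-diagonal blocks coincide with $J_n$ and are therefore both normal and mutually commuting.

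The remaining two summands encode the commutativity conditions (\ref{prop3})--(\ref{prop5}), and here I would use two elementary observations: first, all the matrices in play, namely $I_n$, $J_n$, and $J_n-I_n$, are polynomials in $J_n$ (indeed $J_n^2=nJ_n$) and therefore commute pairwise, so $\mathcal{NC}_2(A_{\mu\mu},A_{\alpha\beta})_{ij}=0$ and $\mathcal{NC}_3(A_{\mu\mu},A_{\nu\nu})_{ij}=0$; second, since all degrees are equal, every factor of the form $d_{\mu i}-d_{\mu j}$ or $d_{\nu j}-d_{\nu i}$ vanishes, so the terms multiplying the edge characteristic functions $\mathcal{X}_{\alpha\beta}$, $\mathcal{X}_{\nu\nu}$, and $\mathcal{X}_{\mu\mu}$ all drop out. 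Consequently the third and fourth summands of $\mathcal{QD}(G)$ are identically zero, and adding all contributions gives $\mathcal{QD}(G)=0$.

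I do not expect a genuine obstacle here: the entire content is the recognition that a complete graph forces every diagonal block to be $J_n-I_n$, every off-diagonal block to be $J_n$, and every degree to be equal, after which normality is immediate from symmetry and commutativity reduces to the mutual commutativity of the all-ones matrix with itself and with the identity. The only point requiring a little care is the claim of labelling-independence, which I would justify by the remark that adjacency within and between clusters is completely determined by completeness and not by the particular vertex ordering, so the verification above applies verbatim to every labelling.
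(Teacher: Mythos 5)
Your proposal is correct and follows essentially the same route as the paper: identify the blocks as $A_{\mu\mu}=J_n-I_n$ (so $\rho_{\mu\mu}$ is a linear combination of $I_n$ and $J_n$) and $A_{\mu\nu}=J_n$, then observe that all blocks are normal and commute pairwise, which forces every summand of $\mathcal{QD}(G)$ to vanish. Your version merely spells out the details the paper leaves implicit (the identity $J_n^2=nJ_n$, the vanishing of the degree-difference terms, and labelling-independence), which is a welcome but not substantively different elaboration.
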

	
	\begin{proof}
		As $G$ is a complete graph, degree of every vertex is $N -1$. Thus total degree of $G$ is $d = N(N - 1)$. Then the  blocks of $\rho(G)$ are given by
		\begin{equation} 
			\rho_{\mu \nu} = \begin{cases} \frac{\pm 1}{d}A_{\mu \mu} = \frac{1}{d}\left[(N - 1) I_n + J_n - I_n\right] = \frac{1}{d}[(N - 2)I_n + J_n] & ~\text{for}~ \mu = \nu \\ \frac{\pm 1}{d}A_{\mu \nu} = \frac{\pm 1}{d}J_n & ~\text{for}~ \mu \neq \nu \end{cases}
		\end{equation} where  $1\leq \mu,\nu \leq m$ and $J_n$ is the all-one matrix of order $n.$
		Note that all the blocks $\rho_{\mu \nu}$ are normal and commute pairwise. Hence the desired result follows.
	\end{proof}
	
	We conclude from Theorem \ref{Thm:com} that for any $m$ and $n$ there is a graph $G$ of order $mn$ for which the corresponding bipartite states of dimension $m\times n$ have zero quanum discord.  In the next theorem we prove that given any $n,$ the complete bipartite graphs $G$ on $2n$ vertices, that is, $G$ consists of two clusters $\langle C_\mu\rangle, \mu=1,2$ with $n$ vertices such that no two vertices of $C_\mu$ for a fixed $\mu$ are adjacent and all pairs of vertices $u\in C_1, v\in C_2,$ $(u, v)\in E(G),$ provides zero quantum discord bipartite states $\rho(G)$ of dimension $2\times n.$ 

	\begin{theorem}\label{Thm:cb}
		Let $G$ be a complete bipartite graph on $2n$ vertices with two clusters, each on $n$ vertices. Then $\mathcal{QD}(G)=0,$ that is, $\rho(G)$ have zero quantum discord. 
	\end{theorem}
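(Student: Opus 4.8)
The plan is to exploit the rigid block structure that a complete bipartite graph forces on $\rho(G)$ and to verify directly that conditions (\ref{prop1})--(\ref{prop5}) all hold, so that every deviation term in Definition \ref{gtqc} vanishes and hence $\mathcal{QD}(G)=0$. First I would record the adjacency and degree data. Taking $m=2$ with the two parts as the clusters $\langle C_1\rangle, \langle C_2\rangle$, a complete bipartite graph has no edge inside either cluster, so $A_{11}=A_{22}=0$, while every vertex of $C_1$ is joined to every vertex of $C_2$, giving $A_{12}=A_{21}=J_n$, the all-one matrix of order $n$. Consequently every vertex has degree $n$, whence $D_1=D_2=nI_n$ and the total degree is $d=2n^2$.

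Substituting this into the block formula (\ref{block_and_graph}), the diagonal blocks collapse to $\rho_{\mu\mu}=D_\mu+sA_{\mu\mu}=nI_n$, a scalar multiple of the identity, while the only off-diagonal block type is $\rho_{12}=\rho_{21}=A_{12}=J_n$. This collapse is the heart of the matter: regularity of the clusters makes $D_\mu$ scalar and the absence of internal edges makes $A_{\mu\mu}$ vanish. From here the five conditions are immediate. Conditions (\ref{prop1}), (\ref{prop4}) and (\ref{prop5}) hold because $nI_n$ is symmetric and commutes with every matrix; condition (\ref{prop2}) holds because $J_n$ is symmetric, hence normal (equivalently, the criterion of Theorem \ref{normality} is met, since all neighbourhoods in $\langle C_1,C_2\rangle$ coincide); and condition (\ref{prop3}) holds because, for $m=2$, the only two off-diagonal blocks are both $J_n$ and trivially commute. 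Therefore each of the deviation sums (\ref{normality_deviation}), (\ref{commutativity_deviation}), (\ref{commutativity_1_deviation}) and (\ref{commutativity_2_deviation}) is zero, giving $\mathcal{QD}(G)=0$.

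I do not expect a genuine obstacle here; the argument is a direct specialization of the block analysis used for Theorem \ref{Thm:com}. The one point deserving a remark is independence from the labelling: since relabelling vertices within a single part leaves $A_{11},A_{22}$ zero and $A_{12}$ equal to $J_n$, the blocks $\rho_{\mu\mu}$ and $\rho_{\mu\nu}$ are unaffected up to the permutation covered by the local-unitary invariance theorem, so the conclusion holds for every labelling of the two clusters.
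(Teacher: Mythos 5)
Your proposal is correct and follows essentially the same route as the paper: compute the explicit block form of $\rho(G)$ for the complete bipartite graph (diagonal blocks $nI_n$, off-diagonal blocks $sJ_n$) and observe that all blocks are normal and pairwise commuting, so every deviation term vanishes. (Incidentally, your normalization $d=2n^2$ is the correct total degree; the paper writes $\tfrac{1}{2n}$, which appears to be a typo, but this does not affect the argument.)
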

	\begin{proof}
		Let $C_1 = \{v_{1,1}, v_{1,2}, \dots v_{1,n}\}$, $C_2 = \{v_{2,1}, v_{2,2}, \dots v_{2,n}\}$ be the bipartition of $G$. Then
		$$\rho(G) = \frac{1}{2n}\begin{bmatrix} nI_n & s J_n \\ s J_n & nI_n \end{bmatrix}.$$
		It is easy to verify that all the block matrices commute with each other and they are normal matrices. Hence $\mathcal{QD}(G)=0.$ 
	\end{proof}

	Next, we provide an example of two isomorphic graphs such that for one the corresponding bipartite states have zero quantum discord and for the other, the corresponding states have non-zero quantum discord. Thus, the following example establishes that quantum discord is not invariant under graph isomorphism, hence depends on labeling of the vertices.  

	\begin{example}
		In the figure \ref{bipartite_graphs}, there are two isomorphic complete bipartite graphs $G$ and $H$ with vertex set $V = \{1,2,3,4,5,6\}$. It consists of two clusters $C_1 = \{1,2,3\}$ and $C_2 = \{4,5,6\}$. A simple calculation shows that $\mathcal{QD}(H)\neq 0$ and from Theorem \ref{Thm:cb}, $\mathcal{QD}(G)=0.$ It is interesting to observe that $H$ is a $3$-regular graph which confirms that regular graphs need not represent states with zero quantum discord.
		\begin{figure}
			\begin{subfigure}[b]{0.4\textwidth}
				\begin{tikzpicture}
					\node (4) at (0,0) {$\bullet_4$};
					\node (5) at (2,0) {$\bullet_5$};
					\node (6) at (4,0) {$\bullet_6$};
					\node (1) at (0,2) {$\bullet_1$};
					\node (2) at (2,2) {$\bullet_2$};
					\node (3) at (4,2) {$\bullet_3$};
					
					\draw (1) -- (4);
					\draw (1) -- (5);
					\draw (1) -- (6);
					\draw (2) -- (4);
					\draw (2) -- (5);
					\draw (2) -- (6);
					\draw (3) -- (4);
					\draw (3) -- (5);
					\draw (3) -- (6);
				\end{tikzpicture}
				\caption{Graph $G$}
			\end{subfigure}
			\hspace{2cm}
			\begin{subfigure}[b]{0.4\textwidth}
				\begin{tikzpicture}
					\node (4) at (0,0) {$\bullet_4$};
					\node (5) at (2,0) {$\bullet_5$};
					\node (6) at (4,0) {$\bullet_6$};
					\node (1) at (0,2) {$\bullet_1$};
					\node (2) at (2,2) {$\bullet_2$};
					\node (3) at (4,2) {$\bullet_3$};
					
					\draw (1) -- (2);
					\draw (1) -- (6);
					\draw (2) -- (1);
					\draw (2) -- (4);
					\draw (2) -- (5);
					\draw (3) -- (4);
					\draw (3) -- (5);
					\draw (1) .. controls(2, 2.5) .. (3);
					\draw (4) .. controls(2, -.5) .. (6);
				\end{tikzpicture}
				\caption{Graph $H$}
			\end{subfigure}
			\caption{Isomorphic complete bipartite graphs with different quantum discords.}
			\label{bipartite_graphs}
		\end{figure}
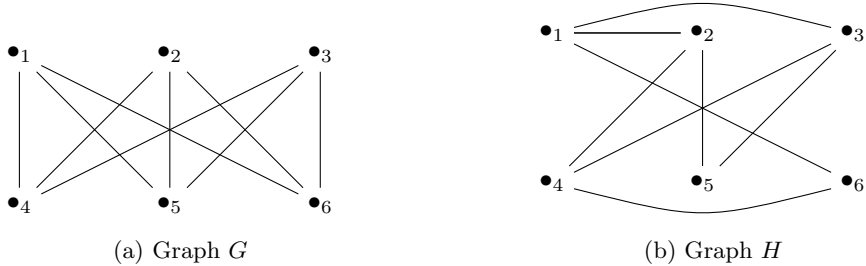
	\end{example}

	Now we consider partially symmetric graphs which were introduced in \cite{dutta2016bipartite} and show that states corresponding to  bipartite partially symmetric regular graphs have always zero quantum discord. First we recall the following definition from \cite{dutta2016bipartite}. 
	
	\begin{definition}{\bf Partially symmetric graph:}
		A graph $G$ with clusters $C_1, C_2, \dots C_m$ is called a partially symmetric graph if the edge $(v_{\mu i}, v_{\nu j}) \in E(G)$ indicates, $(v_{\nu i}, v_{\mu j}) \in E(G)$.
	\end{definition}
	
	Note from the definition that every block of the adjacency matrix of a partially symmetric graph is a symmetric matrix. Then we have the following theorem.
	
	\begin{theorem}
		Every bipartite partially symmetric regular graph has zero quantum discord.
	\end{theorem}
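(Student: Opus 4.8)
The plan is to exploit the rigid structure that the three hypotheses force on the blocks of $\rho(G)$ and then simply read off that conditions (\ref{prop1})--(\ref{prop5}) all hold. Since $G$ is bipartite with the two clusters $C_1,C_2$ as its colour classes, there are no edges inside a cluster, so $A_{11}=A_{22}=0$. Partial symmetry makes each off-diagonal block symmetric, and combined with the identity $A_{21}=A_{12}^t$ coming from (\ref{adj2}) this yields $A_{21}=A_{12}$. Finally, writing $r$ for the common degree of the $r$-regular graph, regularity gives $D_1=D_2=rI_n$, and the total degree is $d=2rn$.

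First I would record the resulting form of the density matrix,
$$\rho(G)=\frac{1}{2rn}\begin{bmatrix} rI_n & s A_{12} \\ s A_{12} & rI_n\end{bmatrix},$$
so that its four blocks are $\rho_{11}=\rho_{22}=\frac{1}{2n}I_n$ and $\rho_{12}=\rho_{21}=\frac{s}{2rn}A_{12}$.

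Next I would verify normality and commutativity directly from this form. Conditions (\ref{prop1}) and (\ref{prop2}) hold because every block is symmetric (the diagonal blocks are multiples of the identity and $A_{12}$ is symmetric), hence normal. For the commutativity conditions, the two diagonal blocks are scalar multiples of $I_n$ and therefore commute with every matrix, which disposes of (\ref{prop4}) and (\ref{prop5}); and for $m=2$ condition (\ref{prop3}) reduces to $\rho_{12}\rho_{21}=\rho_{21}\rho_{12}$, which is immediate since $\rho_{12}=\rho_{21}$. Thus all blocks are normal and pairwise commuting, so each of the deviation sums (\ref{normality_deviation}), (\ref{commutativity_deviation}), (\ref{commutativity_1_deviation}) and (\ref{commutativity_2_deviation}) vanishes and $\mathcal{QD}(G)=0$.

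There is no serious obstacle here; the content is entirely in unpacking the three hypotheses, so the main care is in pinpointing where each is used. Partial symmetry is exactly what guarantees normality of the off-diagonal blocks: without it $A_{12}$ would be a general $(0,1)$ matrix and the criterion of Theorem \ref{normality} would typically fail, as happens in the Werner case. Regularity is exactly what is needed for (\ref{prop4}) and (\ref{prop5}): for a non-regular bipartite partially symmetric graph $D_\mu$ is a genuinely non-scalar diagonal matrix and need not commute with $A_{12}$, so the terms $\mathcal{X}_{\alpha\beta}(i,j)(d_{\mu i}-d_{\mu j})$ appearing in (\ref{commutativity_1_deviation}) can be nonzero. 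Bipartiteness removes the within-cluster blocks $A_{\mu\mu}$, which is what keeps (\ref{prop4}) and (\ref{prop5}) free of the potentially non-commuting products $A_{\mu\mu}A_{\alpha\beta}$ and $A_{\mu\mu}A_{\nu\nu}$. Collecting these remarks, all deviation sums vanish and the theorem follows.
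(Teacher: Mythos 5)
Your proof is correct and follows essentially the same route as the paper: both reduce $\rho(G)$ to the block form $\frac{1}{2rn}\begin{bmatrix} rI_n & sA_{12}\\ sA_{12} & rI_n\end{bmatrix}$ and observe that all blocks are normal and pairwise commuting. Your version is somewhat more explicit about which hypothesis (bipartiteness, partial symmetry, regularity) is responsible for which of the conditions (\ref{prop1})--(\ref{prop5}), but the substance is identical.
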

	\begin{proof}
		Let $G$ be a partially symmetric regular bipartite graph. Then
		$$\rho(G) = \begin{bmatrix} rI_n & s A_n \\ s A_n & rI_n \end{bmatrix},$$
		where $r$ is the regularity of the graph and $s\in\{1,-1\}$. Since $A_n$ is a symmetric matrix, it is normal. Besides, all these block matrices commute with each other. Hence, $\mathcal{QD}(G)=0.$
	\end{proof}

	\begin{theorem}
		Let $G = \langle C_\mu, C_\nu \rangle$ be a regular graph satisfying the condition of Theorem \ref{normality}. Then the states $\rho(G)$  have zero quantum discord.
	\end{theorem}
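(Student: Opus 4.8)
The plan is to write down the block form of $\rho(G)$ explicitly and then check each of the five zero--discord conditions (\ref{prop1})--(\ref{prop5}) directly, exploiting the fact that regularity of a \emph{purely} bipartite graph forces the diagonal blocks to be scalar matrices. Since $G=\langle C_\mu,C_\nu\rangle$ consists only of edges running between the two clusters, here $m=2$ and the intra-cluster adjacency blocks vanish, that is $A_{\mu\mu}=A_{\nu\nu}=0$. Writing $A:=A_{\mu\nu}$ for the single off-diagonal adjacency block, every neighbour of a vertex $v_{\mu i}$ lies in $C_\nu$; hence the degree of each vertex equals its number of neighbours across the partition, and $r$-regularity gives $D_\mu=D_\nu=rI_n$. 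Consequently, independent of the sign convention $s\in\{1,-1\}$, the density matrix takes the form
$$\rho(G)=\frac{1}{d}\begin{bmatrix} rI_n & sA \\ sA^t & rI_n\end{bmatrix},\qquad d=2nr,$$
so that the relevant blocks (up to the irrelevant scalar $1/d$) are $\rho_{\mu\mu}=\rho_{\nu\nu}=rI_n$, $\rho_{\mu\nu}=A$ and $\rho_{\nu\mu}=A^t$.

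Next I would verify the conditions in turn. Condition (\ref{prop1}) is automatic since the diagonal blocks $rI_n$ are Hermitian. For (\ref{prop2}) the only off-diagonal block is $A$, whose normality $AA^t=A^tA$ is exactly the hypothesis that $G$ satisfies the condition of Theorem \ref{normality}. For (\ref{prop3}), because $m=2$ the only distinct pair of off-diagonal blocks is $\{\rho_{\mu\nu},\rho_{\nu\mu}\}=\{A,A^t\}$, and their commutativity $AA^t=A^tA$ is once more precisely the normality of $A$. Finally, conditions (\ref{prop4}) and (\ref{prop5}) each involve a diagonal block $\rho_{\mu\mu}=rI_n$, which is a scalar multiple of the identity and therefore commutes trivially with every other block. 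Hence all of (\ref{prop1})--(\ref{prop5}) hold, and $\mathcal{QD}(G)=0$ by Definition \ref{gtqc}.

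The point worth emphasising --- and where all the work actually concentrates --- is the structural reduction of the first paragraph: for a regular bipartite graph the vanishing of the intra-cluster blocks together with the constant degree collapses the two commutativity conditions (\ref{prop4}), (\ref{prop5}) and the first normality condition (\ref{prop1}) to triviality, while the remaining requirements (\ref{prop2}) and (\ref{prop3}) both coincide with the single assumption $AA^t=A^tA$. I do not expect a genuine obstacle here; the only care needed is to confirm that with $m=2$ condition (\ref{prop3}) produces no constraint beyond commutativity of $\{A,A^t\}$, and that the sign $s$ never enters the diagonal blocks because $A_{\mu\mu}=A_{\nu\nu}=0$. This theorem therefore subsumes Theorem \ref{Thm:cb} (the complete bipartite case, where $A=J_n$ is symmetric and hence trivially normal) as a special instance.
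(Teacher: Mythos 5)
Your proof is correct and takes essentially the same route as the paper: both write $\rho(G)=\frac{1}{2nr}\bigl[\begin{smallmatrix} rI_n & sA \\ sA^t & rI_n\end{smallmatrix}\bigr]$ and observe that the diagonal blocks are scalar matrices while the off-diagonal blocks are normal and mutually commuting precisely because $AA^t=A^tA$. Your version is just more explicit than the paper's (which merely says the argument is "similar to the last theorem"), spelling out how each of conditions (\ref{prop1})--(\ref{prop5}) reduces to the normality hypothesis.
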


	\begin{proof}
		The proof is similar to the last theorem. Here the matrix $A_n$ is a normal matrix instead of symmetric matrix. Indeed the quantum states corresponding to $G$ are given by
		\begin{equation}
			\rho(G) = \frac{1}{2nr}\begin{bmatrix} rI_n & s A_n \\ s A^t_n & rI_n \end{bmatrix} = \frac{1}{2n}\begin{bmatrix} I_n & s \frac{1}{r} A_n \\ s \frac{1}{r} A^t_n & I_n \end{bmatrix}, ~\text{where}~ s\in\{1,-1\}.
		\end{equation}
	\end{proof} 

	Consider the matrix $\begin{bmatrix} rI_n & A_n \\ A^t_n & rI_n \end{bmatrix}$ in the above equation by putting $s = 1$. Every row and column has equal sum $2r$. The matrix $\frac{1}{2r}\begin{bmatrix} rI_n & A_n \\ A^t_n & rI_n \end{bmatrix}$ is an example of a doubly-stochastic matrix. Doubly stochastic matrices are widely used in different branches of science \cite{horn2012matrix}.
	
	Finally as it is well known that there are separable quantum states with non-zero quantum discord, in the following example we confirm the same also for the states arising from graphs.
	\begin{example}
		Consider the bipartite partially symmetric graph $G$ representing a separable two-qubit mixed state.
		$$\xymatrix{\bullet_{1,1} \ar@{-}[d] \ar@{-}[dr] & \bullet_{1,2} \ar@{-}[dl] \\ \bullet_{2,1} & \bullet_{2,2}} \hspace{1cm} L(G) = \begin{bmatrix}  2&  0& -1& -1 \\ 0&  1& -1&  0 \\ -1& -1&  2&  0 \\ -1&  0&  0&  1\end{bmatrix}.$$
		Note that $\begin{bmatrix} 2& 0 \\ 0 & 1\end{bmatrix}$ and $\begin{bmatrix} -1& -1 \\ -1&   0\end{bmatrix}$ do not commute, hence $\mathcal{QD}(G)\neq 0$ although $\rho(G)=\dfrac{1}{5}L(G)$ represents a $2$-qubit separable state. 
	\end{example}

	\section{Conclusions and open problems}
	
	This work is important from the perspective of mathematics and theoretical quantum information. Calculating the exact amount of quantum discord is a computationally formidable task. Here, we derive graph theoretic criteria for normality and commutativity of binary matrices. The blocks of a density matrix of a zero discord state are normal and commuting.  We apply combinatorial tools to find out graph theoretic criterion for zero quantum discord. Further, we propose a graph theoretic measure of discord. This work initiates a number of directions for future research.
	
	\begin{enumerate}
		\item
			Given a positive integer $n$, calculate the exact number of binary normal matrices. There is no general formula for this problem till date, although some lower bound exists. This work provides a graph theoretic visualization to the structure of binary normal matrices, which may be useful for solving this problem.

		\item
			The Werner states play an important role in quantum information theory. It was proved in \cite{li2007total} that they have non-zero quantum discord. These states can be represented by weighted graphs and will be considered in a forthcoming work.
	\end{enumerate}

	\section*{Acknowledgment}
	This work was partially supported by the project \textit{“Graph theoretical aspects in quantum information processing”} [Grant No. 25(0210)/13/EMR-II] funded by Council of Scientific and Industrial Research, New Delhi. S.D. is grateful to the Ministry of Human Resource Development, Government of India, for a doctoral fellowship. This work may be a part of his doctoral thesis.

	\bibliographystyle{unsrt}

\end{document}